\def\RR{{\mathbb R}}
\def\CC{{\mathbb C}}
\def\NN{{\mathbb N}}
\def\ZZ{{\mathbb Z}}
\def\a{{\mathcal A}}
\def\h{{\mathcal H}}
\def\d{{\mathcal D}}
\def\diff{{{\rm Diff}^+ (S^1)}}
\newtheorem{theorem}{Theorem}[section]
\newtheorem{corollary}[theorem]{Corollary}
\newtheorem{proposition}[theorem]{Proposition}
\newtheorem{lemma}[theorem]{Lemma}
\def\parchar#1{\par\noindent
  \hbox to \parindent{#1\hfill}\ignorespaces
}
\begin{document}
\title{Local equivalence of representations of $\diff$ corresponding to different highest weights}
\author{Mih\'aly Weiner\footnote{Supported in part by the 
ERC advanced grant 669240 QUEST ``Quantum Algebraic Structures and Models'' and by OTKA grant no. 104206.}}
\date{}
\maketitle
\begin{abstract}
Let $c,h$ and $c,\tilde{h}$ be two admissible pairs of central charge and highest weight for $\diff$. It is shown here that the
positive energy irreducible projective unitary representations 
$U_{c,h}$ and $U_{c,\tilde{h}}$ of the group $\diff$
are {\it locally} equivalent. This means that for any $I\Subset S^1$ open proper interval, there exists a unitary operator $W_I$ such that $W_I U_{c,h}(\gamma)W_I^* = U_{c,\tilde{h}}(\gamma)$
for all $\gamma \in \diff$ which act identically on $I^c\equiv S^1\setminus I$ (i.e.\! which can ``displace'' or ``move'' points only in $I$).
This result extends and completes earlier ones that dealt with  only certain regions of the ``$c,h$-plane'', and closes the gap in the full classification of superselection sectors of Virasoro nets.
\end{abstract}

\section{Introduction}

The highest weight projective unitary representations of the group of orientation preserving diffeomorphisms $\diff$ of the unit circle $S^1=\{z\in \CC|\, |z|=1\}$ play a fundamental role in conformal quantum field theory. We postpone the detailed description of the representation $U_{c,h}$
associated to an admissible pair of the central charge $c>0$ and highest weight $h\geq 0$ (and how it is obtained from the unitary representation $L^{c,h}$ of the Virasoro algebra through the use of the stress-energy field $T_{c,h}$) to
the preliminaries, but note here that they are all irreducible and pairwise inequivalent: that is, if
$(c,h)$ and $(\tilde{c},\tilde{h})$ are both admissible pairs and
$W$ is a unitary such that $WU_{c,h}(\gamma)W^* = U_{\tilde{c},\tilde{h}}(\gamma)$
for all $\gamma\in\diff$, then $(c,h)= (\tilde{c},\tilde{h})$ and $W$ is a multiple
of the identity. However, some of these representations might be {\it locally} equivalent. This means, that even with $(c,h)\neq (\tilde{c},\tilde{h})$ it can happen that for any open proper interval of the circle $I\Subset S^1$, the restrictions to the subgroup formed by the diffeomorphisms localized in $I$ are unitarily equivalent; i.e.\! that for any $I\Subset S^1$ there exists a unitary $W_I$ such that
$W_I U_{c,h}(\gamma)W_I^* = U_{\tilde{c},\tilde{h}}(\gamma)$ for all
\begin{equation}
\gamma \in G_I = \{\gamma\in \diff | \, \gamma|_{S^1\setminus I} = {\rm id}_{S^1\setminus I}\}.
\end{equation}
Local equivalence can be also formulated at the level of self-adjoint generators:
as it will be explained in the preliminaries, the unitary $W_I$ establishes a local equivalence (relative to $I\Subset S^1$) between $U_{c,h}$ and $U_{\tilde{c},\tilde{h}}$ if and only if
\begin{equation}
W_I T_{c,h}(f)W_I^* = T_{\tilde{c},\tilde{h}}(f)
\end{equation}
for all $f\in C^\infty(S^1,\RR)$ with support in $I$.

The question of local equivalence comes up naturally when studying superselection sectors of conformal field theory in the setting \cite{FrG} of Haag-Kastler nets. When $(c,0)$ is admissible, the collection of von Neumann algebras 
\begin{equation}
\a_c(I)= \{U_{c,0}(\gamma)|\, \gamma\in G_I\}''\;\;\;\;\;\; (I\Subset S^1)
\end{equation}
together with the representation $U_{c,0}$ form a conformal chiral net: the {\it Virasoro net} Vir$_c$ at central charge $c$. It is a highly important model since every conformal chiral net of von Neumann algebras contains a Virasoro net as an irreducible subsystem; a fact which for example enabled complete classification \cite{KL} of conformal chiral nets with central charge $c<1$ and a partial one \cite{Carpi2,Xu} at central charge $c=1$. 

From the point of view of the Virasoro subnet, the full conformal net with central charge $c$ as well as its superselection sectors are all locally normal representations of Vir$_c$.
This is why it is so crucial to understand and classify the superselection sectors (i.e.\! the locally normal irreducible representations) of Vir$_c$. 

A locally normal irreducible representation $\pi$ of any conformal net $(\a,U)$ --- and in particular, of $(\a_c,U_{c,0})$ --- is automatically diffeomorphism covariant with positive energy \cite{weiner}. This means that we have a strongly continuous projective unitary positive energy representation $U^\pi$ of $\diff$ such that 
\begin{equation}
\pi_I(U(\gamma)) = U^\pi(\gamma)
\end{equation}
for every $\gamma\in G_I$. In case we deal with a Virasoro net, then $U^\pi$ must be irreducible and hence --- up to unitary equivalence --- it must coincide with one of the highest weight representations $U_{\tilde{c},\tilde{h}}$; see e.g.\! \cite[Theorem A.2]{Carpi2} and the references there given for the classification of strongly continuous projective unitary positive energy representations of $\diff$. As $\pi_I$ is always unitarily implementable --- see e.g.\! \cite[Lemma 4.4]{FrG} --- we finally arrive to the conclusion: sectors of Vir$_c$ are in one-to-one correspondence of the highest weight projective unitary representations of $\diff$ that are {\it locally equivalent} to $U_{c,0}$ in the sense we introduced it here, c.f.\! \cite[Proposition 2.1]{Carpi2}.

As is explained in the preliminaries, the central charge $c$ is ``locally detectable''. That is, if $U_{c,h}$ and $U_{\tilde{c},\tilde{h}}$ are locally equivalent, then $c=\tilde{c}$.
So we can discuss the problem for each possible value of the central charge in a separate manner.

The case $c<1$ has been ``completely cleared'': because of the coset construction of Goddard, Kent and Olive \cite{GKO}, we know that for each value of the highest weight $h$ for which the pair $(c,h)$ is admissible, $U_{c,h}$ indeed defines a superselection sector of Vir$_c$ (that is, it is locally equivalent to $U_{c,0}$); see the more detailed explanation at \cite[Section 2.4]{Carpi2}. Actually, for the $c<1$ case, not only that we have a complete list of sectors, but even their fusion rules and statistical dimensions are well-understood, see more in \cite{KL}.

Using stress-energy constructions in the vacuum representation space of the U(1) current algebra, Buchholz and Schulz-Mirbach proved \cite{BS-M} local equivalence of $U_{c,h}$ and
$U_{c,0}$ for all values of $c\geq 1$ and $h\geq \frac{c-1}{24}$. Some results concerning 
fusion rules and statistical dimensions of these sectors can be found in \cite{Rehren,Carpi1,Carpi2,Xu}; however, our knowledge is not complete. 
As noted by Buchholz, using tensorial products, one can show that if the admissible pairs
$(c,h)$ and $(\tilde{c},\tilde{h})$ are ``good'' in the sense that $U_{c,h}$ is locally 
equivalent to $U_{c,0}$ and likewise, $U_{\tilde{c},\tilde{h}}$ is locally 
equivalent to $U_{\tilde{c},0}$, then it follows that also $U_{c+\tilde{c},h+\tilde{h}}$
is ``good'' in that it is locally equivalent to $U_{c+\tilde{c},0}$; see the
details at \cite[Section 2.4]{Carpi2}. In this way the region where local equivalence 
of $U_{c,h}$ and $U_{c,0}$ can be shown enlarges; in particular all values of $c\geq 2$
and $h\geq 0$ will fall in. However, for example when $1<c<\frac{1}{2}+\frac{7}{10}$, 
this method will surely not give anything since the two smallest possible values of the central charge are $\frac{1}{2}$ and $\frac{7}{10}$. Thus, after many years the problem was first noted, some regions of the $c,h$-plane could still not be covered till now.

In this paper it is shown that if $c>1$, then $U_{c,h}$ is locally equivalent 
to $U_{c,h_c}$ for any $h<h_c=\frac{c-1}{24}$. Together with the listed earlier results
this completely settles the question of local equivalence and shows that 
for any two admissible pairs $(c,h)$ and $(\tilde{c},\tilde{h})$, the representations
$U_{c,h}$ and $U_{\tilde{c},\tilde{h}}$ are locally equivalent if and only if $c=\tilde{c}$.
In particular, each highest weight representation $U_{c,h}$ gives a locally normal irreducible representation of Vir$_c$ (and there are no other ones).

The proof relies on two main ingredients. First, just as the method of Buchholz and Schulz-Mirbach, it uses realizations of the Virasoro
algebra in the U(1) current (or as it is also called: the Heisenberg) algebra. Second,
that under certain conditions, the dependence of expectation values of the form 
\begin{equation}
\langle\Psi_{c,h}, \, 
e^{iT_{c,h}(f_1)} \ldots 
e^{iT_{c,h}(f_n)}\Psi_{c,h}
\rangle
\end{equation}
on $h$ (where $f_1,\ldots f_n\in C^\infty(S^1,\RR)$ are considered as fixed functions and $\Psi_{c,h}$ is the normalized highest weight vector) can be shown to be complex analytic. It is the method 
of analytic continuations that will ultimately allow us to access the region not covered by previous arguments.

\section{Preliminaries}

A unitary ``highest weight'' representation of Virasoro algebra with central charge $c>0$ and highest weight $h\geq 0$ consists of a complex scalar product space $V_{c,h}$ and a collection of linear operators $L^{c,h}_n$ $(n\in\ZZ)$ acting on $V_{c,h}$ such that
we have the Virasoro algebra commutation relations
\begin{equation}
{[} L_n,L_m{]} = (n-m) L_{n+m} +\frac{c}{12}(n^3-n)\delta_{n,-m}I\;\;\;\;\; (n,m\in\ZZ)
\end{equation}
the unitarity condition
\begin{equation} 
\langle u, L_n v\rangle = \langle L_{-n}u,v\rangle \;\;\;\;\; (u,v\in V_{c,h}, n\in \ZZ),
\end{equation}
and an up-to-phase 
unique normalized vector (which we shall refer to as the normalized highest weight vector) $\Psi_{c,h}\in V_{c,h},\; \|\Psi_{c,h}\|=1$ such that
\begin{equation}
L_0^{c,h}\Psi_{c,h} = 0\;\;\;\;\;\;\;\;
L^{c,h}_n\Psi_{c,h} = 0\;\;\; \textrm{for all}\;\;n>0
\end{equation}
and $V_{c,h}$ is the smallest subspace containing $\Psi_{c,h}$ and invariant for all operators $L_n^{c,h}$ $(n\in \ZZ)$. These representations are completely determined by the listed properties, are irreducible, and for different values of the central charge and highest weight are pairwise inequivalent in the following sense. If both $L^{c_1,h_1}_{n}$
and $\tilde{L}^{c_2,h_2}_n$ ($n\in\ZZ$) form a unitary highest weight representation of the
Virasoro algebra with central charges $c_1$ and $c_2$, highest weights $h_1$ and $h_2$ and
normalized highest weight vectors $\Psi_{c_1,h_1}$ and $\tilde{\Psi}_{c_2,h_2}$, respectively, then there exists an invertible linear map $W$ such that 
$WL^{c_1,h_1}_{n}W^{-1} = \tilde{L}^{c_2,h_2}_n$ for all $n\in \ZZ$ if and only if 
$c_1=c_2$ and $h_1=h_2$ and in this case $W$ can be uniquely normalized so that 
$W\Psi_{c_1,h_1}= \tilde{\Psi}_{c_2,h_2}$. Moreover, with this normalization $W$ is actually a scalar product preserving linear isomorphism. 

When such a unitary highest weight representation exists, we will say that $(c,h)$ is an {\it admissible pair}; this happens if and only if  
either $c\geq 1$ and $h\geq 0$ or there exists an $m\in\NN, \; m\geq 3$ and a $p,q\in \{1,\ldots m+1\}, \;q<p$
such that
\begin{equation}
c= 1-\frac{6}{m(m+1)}\;\;\;\;\;\textrm{and}\;\;\;\;\;
h=\frac{((m+1)p-mq)^2-1}{4 m(m+1)}.
\end{equation}
For more details, references and background on the representation theory of the Virasoro algebra, we refer to the book \cite{kac}. Here we are only interested by how such a representation ``integrates''  into a projective unitary representation of $\diff$. 

Let $(c,h)$ be an admissible pair and $\h_{h,c}= \overline{V}_{c,h}$ be the Hilbert space obtained by the completion of
$V_{c,h}$. It can be shown that for every smooth function $f:S^1\to \RR$ is a smooth function
with Fourier components
\begin{equation}
\hat{f}_n = \int_{-\pi}^{pi}f(e^{i\theta})e^{-in\theta}\frac{d\theta}{2\pi},\;\;\;\;
(n\in\ZZ)
\end{equation}
the sum
\begin{equation}
T_{c,f}^0(f) = \sum_{n\in\ZZ}\hat{f}_n L^{c,h}_n
\end{equation}
is absolute convergent on every vector of the dense subspace $V_{c,h}\subset \h_{c,h}$.
The obtained operator is closable, its closure $T_{c,h}(f)= \overline{T_{c,h}^0(f)}$ is 
self-adjoint. We shall refer to $T_{c,h}$ as the stress-energy field. It turns out that
$V_{c,h}$ is included in the domain of every product of the form $T_{c,h}(f_1) T_{c,h}(f_2)\ldots T_{c,h}(f_n)$ and with 
\begin{equation}
{[}f,g]:= f \partial_\theta g - f \partial_\theta g\;\;\;\;
\textrm{and} \;\;\; (f,g):= 
\int_{-\pi}^{\pi}\left(\frac{d}{d\theta} f(e^{i\theta}) + \left(\frac{d}{d\theta}\right)^3 f(e^{i\theta}) \right)g(e^{i\theta})\frac{d\theta}{2\pi}
\end{equation}
where $\partial_\theta f$ is the function $e^{i\theta}\mapsto \frac{d}{d\theta}f(e^{i\theta})$, one has that the algebraic relations 
\begin{eqnarray}
\nonumber
T_{c,h}(f)v + tT_{c,h}(g)v = T_{c,h}(f+ t g)v, \\ 
{[}T_{c,h}(f),T_{c,h}(g)]v = i T_{c,h}([f,g])v + \frac{ic}{12}(f,g)v
\end{eqnarray}
is satisfied on every vector $v$ of the dense subspace $V_{c,h}$. Actually, by considering scalar products and taking account of the fact that $V_{c,h}$ is a core for $T_{c,h}(q)$, from here it is an elementary exercise to show that 
\begin{eqnarray}
\nonumber
\overline{T_{c,h}(f) \, + \, t\,  T_{c,h}(g)} =  \, T_{c,h}(f+ t g), \\ 
\label{alg_rel}
\overline{{[}T_{c,h}(f),T_{c,h}(g)]} =   i \, T_{c,h}([f,g]) \, + \, \frac{ic}{12}(f,g) I.
\end{eqnarray}
Finally, one has \cite{GoWa} that there exists a unique projective unitary representation 
$U_{c,h}$ such that
\begin{equation}
U_{c,h}(\gamma) T_{c,h}(f) U_{c,h}(\gamma)^* = T_{c,h}(\gamma_* f) + r(c,f,\gamma) I
\end{equation}
for all $\gamma\in\diff$ and $f\in C^\infty(S^1,\RR)$. Here $\gamma_* f =
(\tilde{\gamma}' f)\circ \gamma^{-1}$ where $\tilde{\gamma}$ is a $2\pi$-periodic diffeomorphism of $\RR$ such that $e^{i\tilde{\gamma}(\theta)}=\gamma(e^{i\theta})$
for all $\theta \in \RR$, and $r(c,f,\gamma)$ is a certain
real constant (depending on $c,f$ and $\gamma$) whose exact value will be irrelevant 
for our discussion. The representation $U_{c,h}$ is strongly continuous and irreducible,
and considering an $f\in C^\infty(S^1,\RR)$ as the real vector field on $S^1$ symbolically written as $f(e^{i\theta})\frac{d}{d\theta}$, with some abuse 
of notations (as on one side a unitary, whereas on the other we shall put a projective unitary operator) we further have that
\begin{equation}
e^{iT_{c,h}(f)} = U_{c,h}({\rm Exp}(f)).
\end{equation}
Since $\diff$ is a simple group (see the survey \cite{Milnor} of Milnor and the references there given), every $\gamma\in \diff$ can be written as a finite product of exponentials 
(as such exponentials generate a normal subgroup) and so the above property actually uniquely determines the representation $U_{c,h}$. 

In the converse direction, it can also be shown \cite{Loke} that a strongly continuous projective unitary irreducible representation $U$ of $\diff$ which satisfies the ``positive energy'' requirement, is unitarily equivalent to one of these constructed highest weight representations. Here the positivity requirement means that there exists a positive
operator $A\geq 0$ such that $U(R_\alpha) = e^{i\alpha A}$ (where $R_\alpha$ is the rotation defined by the formula $R_\alpha(z) = e^{i\alpha}z$) for every $\alpha \in \RR$.
For more details and references, the reader should consult the appendix of \cite{Carpi2}.
We shall finish this section with some easy confirmations regarding local equivalence, c.f.\! \cite[Proposition 2.1]{Carpi2}. 
\begin{lemma}
Let $(c,h)$ and $(\tilde{c},\tilde{h})$ be two admissible pairs, $I\Subset S^1$ an open proper interval and $W_I$ a unitary operator. Then the two conditions
\begin{itemize}\label{lemmacc}
\item[i)] $W_I T_{c,h}(f) W_I^* = T_{\tilde{c},\tilde{h}}(f)$ for all $f\in C^\infty(S^1,\RR)$
with support in $I$,
\item[ii)]
$W_I U_{c,h}(\gamma) W_I^* = U_{\tilde{c},\tilde{h}}(\gamma)$ for all $\gamma\in G_I$.
\end{itemize}
are equivalent, and any of them implies that $c=\tilde{c}$.
\end{lemma}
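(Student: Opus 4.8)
The plan is to prove the two conditions equivalent by using the two "bridges" between the stress–energy field and the group representation already recorded in the preliminaries: on the one hand $e^{iT_{c,h}(f)} = U_{c,h}(\mathrm{Exp}(f))$, and on the other hand the covariance formula $U_{c,h}(\gamma)T_{c,h}(f)U_{c,h}(\gamma)^* = T_{c,h}(\gamma_*f) + r(c,f,\gamma)I$. The crucial locality observation is that if $f\in C^\infty(S^1,\RR)$ has support in $I$, then the vector field $f$ generates a flow supported in $I$, so $\mathrm{Exp}(tf)\in G_I$ for all $t\in\RR$; conversely the subgroup $G_I$ is generated by such exponentials, since $G_I$ is (isomorphic to) $\mathrm{Diff}^+$ of the interval, which is simple, so its one-parameter subgroups $\{\mathrm{Exp}(tf): \mathrm{supp}f\subset I\}$ generate it.

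\smallskip
For the implication (i)$\Rightarrow$(ii): given $f$ supported in $I$, condition (i) says $W_I T_{c,h}(tf)W_I^* = T_{\tilde c,\tilde h}(tf)$ for every $t$ (rescaling $f$ keeps it supported in $I$). Exponentiating the self-adjoint operators and using $e^{iT_{c,h}(tf)} = U_{c,h}(\mathrm{Exp}(tf))$ gives $W_I U_{c,h}(\mathrm{Exp}(tf)) W_I^* = U_{\tilde c,\tilde h}(\mathrm{Exp}(tf))$ — here one should be slightly careful that these are projective unitaries, so the identity holds up to a phase, but taking $t\to 0$ and using strong continuity pins the phase to $1$. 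Since every $\gamma\in G_I$ is a finite product of such exponentials $\mathrm{Exp}(tf)$ with $\mathrm{supp}f\subset I$, multiplicativity of both projective representations propagates the intertwining relation to all of $G_I$ (again the cocycles match by continuity/at the identity), giving (ii).

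\smallskip
For (ii)$\Rightarrow$(i): take $f$ supported in $I$ and set $\gamma_s = \mathrm{Exp}(sf)\in G_I$. Then (ii) gives $W_I U_{c,h}(\gamma_s) W_I^* = U_{\tilde c,\tilde h}(\gamma_s)$, i.e. $W_I e^{isT_{c,h}(f)} W_I^* = e^{isT_{\tilde c,\tilde h}(f)}$ for all $s$; by Stone's theorem the self-adjoint generators must agree, $W_I T_{c,h}(f) W_I^* = T_{\tilde c,\tilde h}(f)$, which is (i). The only subtlety is again the projective phase: a priori one only gets $W_I e^{isT_{c,h}(f)} W_I^* = \lambda(s) e^{isT_{\tilde c,\tilde h}(f)}$ with $|\lambda(s)|=1$; but $s\mapsto\lambda(s)$ is then a continuous homomorphism $\RR\to\mathbb T$, hence $\lambda(s)=e^{i\mu s}$, and Stone's theorem yields $W_I T_{c,h}(f)W_I^* = T_{\tilde c,\tilde h}(f) + \mu I$. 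To kill the constant $\mu$ one argues that $T_{c,h}(f)$ has $\langle\Psi_{c,h}, T_{c,h}(f)\Psi_{c,h}\rangle$ depending only on $\hat f_0$ in a way controlled by the admissibility data; more robustly, one replaces $f$ by $[f,g]$ (a commutator of two functions supported in $I$): by \eqref{alg_rel}, $T_{c,h}([f,g])$ equals $-i$ times a commutator of stress–energy operators up to the scalar $\frac{ic}{12}(f,g)I$, and commutators are manifestly phase-independent, which forces the additive constants to be consistent and ultimately zero for all $f$ whose integral against the relevant functional vanishes; the general case follows since such $f$ span.

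\smallskip
Finally, for the last clause that (i) (equivalently (ii)) forces $c=\tilde c$: from (i) and the commutation relation \eqref{alg_rel} applied on both sides, for $f,g$ supported in $I$ we get
\begin{equation}
\overline{[T_{\tilde c,\tilde h}(f), T_{\tilde c,\tilde h}(g)]} = W_I\Big(i\,T_{c,h}([f,g]) + \tfrac{ic}{12}(f,g)I\Big)W_I^* = i\,T_{\tilde c,\tilde h}([f,g]) + \tfrac{ic}{12}(f,g)I,
\end{equation}
while directly \eqref{alg_rel} gives $\overline{[T_{\tilde c,\tilde h}(f),T_{\tilde c,\tilde h}(g)]} = i\,T_{\tilde c,\tilde h}([f,g]) + \tfrac{i\tilde c}{12}(f,g)I$; subtracting, $\frac{c-\tilde c}{12}(f,g)I = 0$, and since one can choose $f,g$ supported in $I$ with $(f,g)\neq 0$ (the bilinear form is the integral of $(f'+f''')g$ over the circle, which is easily made nonzero with both functions supported in a small interval), we conclude $c = \tilde c$.

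\smallskip
I expect the only real care to be needed in the bookkeeping of the projective phases and, for the forward direction (i)$\Rightarrow$(ii), in checking that the flow of a vector field supported in $I$ indeed stays in $G_I$ and that these flows generate $G_I$ — this last point is where simplicity of $\mathrm{Diff}^+$ of the interval (cited via Milnor's survey) does the work. Everything else is a routine application of Stone's theorem and the two displayed identities from the preliminaries.
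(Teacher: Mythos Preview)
Your overall architecture is the same as the paper's: (i)$\Rightarrow$(ii) by exponentiating and using that exponentials of compactly supported vector fields generate $G_I$; (ii)$\Rightarrow$(i) via Stone's theorem up to an additive scalar; and $c=\tilde c$ read off from the cocycle term in the commutator formula. The (i)$\Rightarrow$(ii) direction is fine (your worries about projective phases are harmless there, since (ii) is a statement about projective unitaries to begin with).

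The genuine gap is in the order of your last two steps. In (ii)$\Rightarrow$(i) you correctly reach $W_I T_{c,h}(f)W_I^*=T_{\tilde c,\tilde h}(f)+\mu(f)I$, and you then try to kill $\mu$ by passing to Lie brackets. But if you redo your commutator computation with the constant still present, what you actually get is
\[
\mu([f,g])\;=\;\frac{\tilde c-c}{12}\,(f,g),
\]
which is \emph{not} zero unless $c=\tilde c$ is already known. Your separate paragraph deriving $c=\tilde c$ then assumes the full (i) (you use $W_I T_{c,h}([f,g])W_I^*=T_{\tilde c,\tilde h}([f,g])$ with no constant), so as written the argument is circular. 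Your alternative idea of comparing $\langle\Psi_{c,h},T_{c,h}(f)\Psi_{c,h}\rangle$ does not help either, since nothing is known about $W_I\Psi_{c,h}$.

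The paper closes this gap by first extracting $c=\tilde c$ directly from the weak relation: choose $f_1,g_1,f_2,g_2$ supported in $I$ with $[f_1,g_1]=[f_2,g_2]$ but $(f_1,g_1)\neq(f_2,g_2)$; then $\mu([f_1,g_1])=\mu([f_2,g_2])$ forces $\frac{\tilde c-c}{12}(f_1,g_1)=\frac{\tilde c-c}{12}(f_2,g_2)$, hence $c=\tilde c$. Only after that does $\mu$ vanish on all Lie brackets, and the proof finishes by the (elementary but not entirely trivial) fact that every smooth $f$ supported in $I$ is a sum of at most two such brackets. Equivalently, the displayed identity says that $(\tilde c-c)$ times the Virasoro cocycle, restricted to vector fields supported in $I$, is the coboundary $d\mu$; local nontriviality of the cocycle then forces $\tilde c=c$. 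Either formulation supplies the missing step that your phrase ``forces the additive constants to be consistent and ultimately zero'' leaves unjustified.
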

\begin{proof}
The first condition clearly implies the second one because exponentials of vector fields with support in $I$ generate a dense subgroup in $G_I$, see \cite[Section V.2]{Loke}. Conversely, assume the second property. Then whenever $f\in C^\infty(S^1,\RR)$ has a 
support in $I$ and $t$ is a real number, $W_I e^{itT_{c,h}(f)} W_I^*= e^{itW_I T_{c,h}(f)W_I^*} $ must be a multiple of $e^{itT_{\tilde{c},\tilde{h}}(f)}$ and hence
 $W_I T_{c,h}(f)W_I^*$ and $T_{\tilde{c},\tilde{h}}(f)$ can only differ in an additive constant. Suppose now that $f_1,f_2,g_1,g_2\in C^\infty(S^1,\RR)$ have their support in $I$.
 Then using that additive constants do not matter inside a commutator, we find that 
 \begin{eqnarray}
 \nonumber
 W_I \left([T_{c,h}(f_1),T_{c,h}(g_1)] - [T_{c,h}(f_2),T_{c,h}(g_2)]\right) W_I^*
 \\
 = [T_{\tilde{c},\tilde{h}}(f_1),T_{\tilde{c},\tilde{h}}(g_1)] - [T_{\tilde{c},\tilde{h}}(f_2),T_{\tilde{c},\tilde{h}}(g_2)].
 \end{eqnarray}
 On the other hand, it is easy to see that one can choose $f_1,f_2,g_1,g_2$ is such a manner that $[f_1,g_1]=[f_2,g_2]$ but $(f_1,g_1)\neq (f_2,g_2)$. Then by the relations discussed at eq.\! (\ref{alg_rel}), the closure of the left hand side is equal to  $\frac{ic}{12}((f_1,g_1)-(f_2,g_2))I$, whereas that of the right hand side is equal to $\frac{ic}{12}((f_1,g_1)-(f_2,g_2))I$, implying that $c=\tilde{c}$. Once we know that $c=\tilde{c}$, we can use similar arguments (relying on commutators) to show that $W_I T_{c,h}([g_1,g_2])W_I^*$ is precisely equal (no additive constant) to $T_{\tilde{c},\tilde{h}}([g_1,g_2])$. However, elementary analysis shows that every 
 $f\in C^\infty(S^1,\RR)$ with support in $I$ can be written as the sum of at most $2$ commutators; i.e.\! that with suitable choice of the local functions $f_1,f_2,g_1,g_2$
 we have $f=[f_1,f_2]+[g_1,g_2]$.
\end{proof}

\section{Realizations of the Virasoro algebra using currents}

A unitary representation of the $U(1)$ current (or as it also called: the Heisenberg algebra) consist of a complex scalar product space $V$ and a collection of linear operators $J_n$ $(n\in\ZZ)$ acting on $V$ and satisfying the commutation relation
\begin{equation}
[J_n,J_m] = n \delta_{n,-m} I\;\;\;\;\; (n,m\in \ZZ)
\end{equation}
and the unitarity condition
\begin{equation}
\langle u, J_n v\rangle = \langle J_{-n}u,v\rangle \;\;\;\;\; (u,v\in V, n\in \ZZ).
\end{equation}
In what follows we shall suppose that we deal with the {\it vacuum representation} of $U(1)$ current algebra;
that is, we have an (up-to-phase unique) element of unit length $\Omega \in V$ (called the vacuum vector) such that 
\begin{equation}
J_n\Omega = 0 \;\;\;\textrm{for every}\;\;\; n\geq 0
\end{equation}
and $V$ is the minimal subspace containing $\Omega$ and invariant to all operators $J_n$ $(n\in\ZZ)$.
It then follows that the seemingly infinite sum appearing in 
\begin{equation}
L_n = \frac{1}{2}:\!J^2\!:_n \, \equiv\,  
\frac{1}{2}
\left(\sum_{m=-\infty}^{-1}J_m J_{n-m} + 
\sum_{m=0}^{\infty}J_{n-m}J_m\right)
\end{equation}
actually results in only finitely many non-zero terms whenever it is applied to a vector of $V$, and defines a unitary representation of the Virasoro algebra with unit central charge. Moreover, 
the hermitian $L_0$ is diagonalizable with nonnegative integer eigenvalues:
\begin{equation}
V= \oplus_{k=0}^\infty V_k\;\;\; \textrm{where}\;\;\;
V_k = {\rm Ker}(L_0- kI).
\end{equation}
Further, $J$ is covariant with respect to this Virasoro algebra representation:
\begin{equation}
[L_n,J_m] = -m\, J_{n+m}\;\;\;\; (n,m\in\ZZ).
\end{equation}
In a similar manner to how it was done in the preliminaries, one can {\it smear}
$J$ and $T$ with smooth test functions and for an $f\in C^\infty(S^1,\RR)$ introduce
\begin{equation}
J(f) = \overline{\sum_{n\in \ZZ}\hat{f}_nJ_n},\;\;\;\;\; 
T(f) = \overline{\sum_{n\in \ZZ}\hat{f}_nJ_n}
\end{equation}
which will again turn out to be self-adjoint operators. Moreover, one finds that every vector of $V$ is actually analytic for $J(f)$, the {\it Weyl-operator} $e^{iJ(g)}$
leaves invariant the dense subspace 
\begin{equation}
\cap_{n\in \NN}\d(\overline{L}^n_0)
\end{equation}
of {\it smooth vectors} and thus one can use convergent series 
to show that with $\partial_\theta g$ defined as the function $z=e^{i\theta} \mapsto \frac{d}{d\theta}g(e^{i\theta})$, we have the
transformation rules
\begin{equation}\label{trule1}
e^{iJ(g)} J(f) e^{-iJ(g)} = J(f) + \int_{-\pi}^{\pi}(\partial_\theta g)(e^{i\theta})f(e^{i\theta})\frac{d\theta}{2\pi}\;  I 
\end{equation}
and 
\begin{equation}\label{trule2}
e^{iJ(g)} T(f) e^{-iJ(g)} = T(f) + J((\partial_\theta g)f)+ 
\int_{-\pi}^{\pi}\frac{(\partial_\theta g)^2(e^{i\theta})}{2}f(e^{i\theta})\frac{d\theta}{2\pi}  \; I,
\end{equation}
see the details for example at \cite[Section 4.2]{CLTW}.

The following construction is well-known and has been used others, see e.g.\! \cite{kac,FH,Carpi2} (though note also that at \cite[Section 3.4]{kac}, the formula is given not on the vacuum space of the U(1) current). However, in part because of differences in conventions and notations, in part because of self-containment here we briefly recall the main idea. 
\begin{proposition}
For any pair of values $\alpha,\beta\in \CC$ the operators $\tilde{L}^{\alpha,\beta}_0=L_0 + \frac{1}{2}(\alpha^2+\beta^2) I$ and
$$
\tilde{L}^{\alpha,\beta}_n= L_n + \alpha J_n + i\beta n J_n
\;\;\;(n\in\ZZ,\, n\neq 0)
$$
form a representation of the Virasoro algebra with central charge $c^{\alpha,\beta}=1+12\beta^2$. 
When $c^{\alpha,\beta}>1$ and $h^{\alpha,\beta}:=\frac{1}{2}(\alpha^2+\beta^2)>0$, then 
one can redefine the scalar product on $V$ such that $\tilde{L}^{\alpha,\beta}$ becomes
a unitary highest weight representation with lowest energy $h^{\alpha,\beta}$ and normalized
highest weight vector $\Omega$. 
If $\alpha,\beta \in \RR$, then the above holds with no need of redefinition of the scalar product.
\end{proposition}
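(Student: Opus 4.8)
The plan is to separate the purely algebraic claim (that the $\tilde L^{\alpha,\beta}_n$ satisfy the Virasoro relations) from the analytic claim about unitarity and highest weight structure. For the algebra, abbreviate $a_n:=\alpha+i\beta n$, so that $\tilde L^{\alpha,\beta}_n=L_n+a_nJ_n$ for $n\neq 0$, while $\tilde L^{\alpha,\beta}_0=L_0+h^{\alpha,\beta}I$ and $J_0=0$ on the vacuum space. Expanding $[\tilde L^{\alpha,\beta}_n,\tilde L^{\alpha,\beta}_m]$ by bilinearity and using only the three brackets at hand, namely $[L_n,L_m]=(n-m)L_{n+m}+\tfrac{1}{12}(n^3-n)\delta_{n,-m}I$, $[L_n,J_m]=-mJ_{n+m}$ and $[J_n,J_m]=n\delta_{n,-m}I$, the terms linear in $J$ assemble into $(a_nn-a_mm)J_{n+m}=(n-m)a_{n+m}J_{n+m}$, which is exactly $(n-m)\tilde L^{\alpha,\beta}_{n+m}$ whenever $n+m\neq 0$ (the cases with $n=0$ or $m=0$ being immediate). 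For $n+m=0$ the only extra contribution is $a_na_{-n}n\,I=(\alpha^2+\beta^2n^2)n\,I$ from the $J$--$J$ bracket, and requiring the total to equal $2n\,\tilde L^{\alpha,\beta}_0+\tfrac{c}{12}(n^3-n)I$ determines, by matching the coefficients of $nI$ and of $n^3I$, simultaneously $h^{\alpha,\beta}=\tfrac{1}{2}(\alpha^2+\beta^2)$ and $c^{\alpha,\beta}=1+12\beta^2$. This computation is entirely mechanical.

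For the second part, note first that $L_0\Omega=0$, $L_n\Omega=0$ and $J_n\Omega=0$ for $n>0$ give at once $\tilde L^{\alpha,\beta}_0\Omega=h^{\alpha,\beta}\Omega$ and $\tilde L^{\alpha,\beta}_n\Omega=0$ for $n>0$; hence the subspace $\tilde V\subseteq V$ generated by $\Omega$ under the operators $\tilde L^{\alpha,\beta}_n$ is a highest weight module for the Virasoro algebra of type $(c^{\alpha,\beta},h^{\alpha,\beta})$, so it is a quotient of the Verma module $M(c^{\alpha,\beta},h^{\alpha,\beta})$. At this point I would invoke the Kac determinant formula (see \cite{kac}): for $c>1$ and $h>0$ one has $h\neq h_{r,s}(c)$ for all integers $r,s\geq 1$ --- the ``vanishing weights'' $h_{r,s}(c)$ with $r\neq s$ are non-real, while those with $r=s$ are $\leq 0$ --- so $M(c^{\alpha,\beta},h^{\alpha,\beta})$ has no singular vectors and is therefore irreducible. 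Consequently $\tilde V$ is isomorphic, as a Virasoro module, to the irreducible highest weight module $V_{c^{\alpha,\beta},h^{\alpha,\beta}}$, via an isomorphism $\Phi$ with $\Phi(\Psi_{c^{\alpha,\beta},h^{\alpha,\beta}})=\Omega$. Comparing graded dimensions then shows $\tilde V$ is all of $V$: on $V_k$ the operator $\tilde L^{\alpha,\beta}_0$ acts as $(k+h^{\alpha,\beta})I$ and $\dim V_k=p(k)$ (the number of partitions of $k$), while the $\tilde L^{\alpha,\beta}_0$-eigenspaces of $V_{c^{\alpha,\beta},h^{\alpha,\beta}}$ also have dimensions $p(k)$ because that module coincides with its Verma module; since $\tilde V$ is contained in $V$ compatibly with these gradings, equality of graded dimensions forces $\tilde V=V$. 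Transporting the scalar product of $V_{c^{\alpha,\beta},h^{\alpha,\beta}}$ to $V$ along $\Phi$ then turns $\tilde L^{\alpha,\beta}$ into a unitary highest weight representation on $V$ with highest weight vector $\Omega$ (of norm $1$) and lowest energy $h^{\alpha,\beta}$.

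Finally, for the last sentence: using the given unitarity relations for the $L_n$ and the $J_n$ on $V$, the algebraic adjoint of $\tilde L^{\alpha,\beta}_n$ on $V$ is $L_{-n}+\bar\alpha J_{-n}-i\bar\beta n J_{-n}$, which coincides with $\tilde L^{\alpha,\beta}_{-n}$ precisely when $\alpha=\bar\alpha$ and $\beta=\bar\beta$; so when $\alpha,\beta\in\RR$ the original scalar product already satisfies the unitarity condition, and the identification $\tilde V=V$ from the previous paragraph still applies, so no redefinition is needed.

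I expect the only genuinely non-routine step to be the identification $\tilde V=V$ --- that $\Omega$ is cyclic for the deformed generators and that the resulting module exhausts the whole Fock space. It rests on two inputs from Virasoro representation theory: the irreducibility of $M(c,h)$ for $c>1$, $h>0$, and the partition-function count of the graded dimensions of $V$ and of $V_{c,h}$. The bracket computation and the adjoint computation are both routine.
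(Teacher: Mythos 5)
Your proposal is correct and follows essentially the same route as the paper: a direct bracket computation for the Virasoro relations, the observation that $\Omega$ generates a quotient of the Verma module which is irreducible for $c>1$, $h>0$ (the paper cites \cite[Proposition 8.2]{kac}, you invoke the Kac determinant formula --- the same input), and the partition-counting comparison of graded dimensions to conclude the cyclic module exhausts $V$, with the scalar product transported from the unitary highest weight module and the real-parameter case handled by the evident adjoint relation.
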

\begin{proof}
Straightforward check shows that $\tilde{L}^{\alpha,\beta}_n$ $(n\in\ZZ)$ indeed form a representation of the Virasoro algebra with central charge $c^{\alpha,\beta}$. The unitarity in case of real $\alpha,\beta$ parameters is also clear. Now let $M$ be the smallest invariant subspace for this representation that contain the vector $\Omega$. Since $\tilde{L}^{\alpha,\beta}_0 \Omega = h^{\alpha,\beta}\Omega$ and $\tilde{L}^{\alpha,\beta}_n \Omega=0$ for all $n>0$, we have that $\Omega$ is a highest weight vector and the restriction of the representation
to $M$ must factor through the Verma module corresponding to 
central charge $c^{\alpha,\beta}$ and lowest energy $h^{\alpha,\beta}$. However, as is  known \cite[Proposition 8.2]{kac}, for $c^{\alpha,\beta}>1$ and $h^{\alpha,\beta}>0$ the Verma module is irreducible, so actually the restriction of our representation to $M$ is equivalent to the Verma one. Thus the only thing that remains to be shown is that $M$ is the full space $V$.

Since $\tilde{L}^{\alpha,\beta}_0 = L_0 + h^{\alpha,\beta} I$, our subspace must be a direct sum $M=\oplus_{k=0}^\infty M_k$ where $M_k\subset V_k$ for each $k$. 
By what we have established ${\rm dim}(M_k)$ must be equal to the dimension of the $k^{\rm th}$ energy level of the Verma modul (corresponding to the value $k+h^{\alpha,\beta}$), which is the number of partitions of $k$. However, this is also the dimension of $V_k$; hence the inclusion $M_k\subset V_k$ is actually an equality and the proof is finished.  
\end{proof}
By what has been explained in the preliminaries, when $\alpha, \beta\in \RR$, the representation $\tilde{L}^{\alpha,\beta}$ gives rise to a stress-energy field $\tilde{T}_{\alpha,\beta}$ for which one has that 
\begin{equation}
\label{Tab}
\tilde{T}_{\alpha,\beta}(f)v = T(f)v + \alpha J(f)v + \beta J(f')v + \frac{\alpha^2+\beta^2}{2}\frac{1}{2\pi}\int_{-\pi}^{\pi}f\; v
\end{equation}
for every smooth vector $v$ and $f\in C^\infty(S^1,\RR)$.

\section{Dependence of expectations on lowest weight}

In what follows we shall fix a smooth function $g:S^1\to \RR$  
with the property that $\partial_\theta g|_{I} = 1$ for a certain open proper interval $I\Subset S^1$. (Note that on the full circle it is not possible to require the derivative to be constant $1$.) Then by a straightforward computation using the transformation rules (\ref{trule1}) and (\ref{trule2}), the formula (\ref{Tab}) and the fact that Weyl-operators preserve the 
set of smooth vectors and our fields in question are all essentially self-adjoint on an even smaller set, we find that for any $\alpha\in \RR$  and smooth function $f:S^1\to\RR$ with support in $I$ 
\begin{equation}
e^{i\alpha J(g)} \tilde{T}_{0,\beta}(f)
e^{-i\alpha J(g)} = \tilde{T}_{\alpha,\beta}(f).
\end{equation}
(Here we really mean {\it equality with domains}, not just an equality on a dense set).
In what follows, we shall also fix a collection $f_1,\ldots f_n$ of smooth, real-valued functions on $S^1$ with supports in $I$. For a $t\in \RR$ and an admissible pair of central charge $c$ and lowest weight $h$, we set
\begin{equation}
F(t,c,h) = \langle\Psi_{c,h}, \, 
e^{itT_{c,h}(f_1)} \ldots 
e^{itT_{c,h}(f_n)}\Psi_{c,h}
\rangle
\end{equation}
where $\Psi_{c,h}\in V_{c,h}$ is the
(up-to-phase) unique normalized highest weight vector. Note that $F(0,c,h)=1$ and that the
map $t\mapsto F(t,c,h)$ is obviously continuous.
\begin{lemma}
\label{analyticity1}
Suppose $c>1$. Then the function
$$
(\frac{c-1}{24},\infty) \ni h \mapsto F(t,c,h)
$$
extends in an analytic manner to the set $\{z\in\CC|\, {\rm Re}(z)>\frac{c-1}{24}\}$.
\end{lemma}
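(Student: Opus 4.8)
The plan is to transport the whole computation to the vacuum space $V$ of the $U(1)$ current algebra by means of the free-field realization of Section~3, where the highest weight becomes an explicit \emph{holomorphic} function of a shift parameter. Fix $\beta_c>0$ with $12\beta_c^{2}=c-1$, so that for every $\alpha\in\RR$ the representation $\tilde{L}^{\alpha,\beta_c}$ has central charge $c^{\alpha,\beta_c}=c$, while $h^{\alpha,\beta_c}=\tfrac12\alpha^{2}+\tfrac{c-1}{24}$ runs bijectively over $(\tfrac{c-1}{24},\infty)$ as $\alpha$ runs over $(0,\infty)$; since $c>1$ one always has $h^{\alpha,\beta_c}\ge\tfrac{c-1}{24}>0$, so the Proposition of Section~3 applies. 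By that Proposition together with the uniqueness recalled in the preliminaries, $(V,\{\tilde{L}^{\alpha,\beta_c}_n\},\Omega)$ is a scalar-product preserving isomorphic copy of $(\h_{c,h},\{L^{c,h}_n\},\Psi_{c,h})$ with $h=h^{\alpha,\beta_c}$, the isomorphism carrying $\Psi_{c,h}$ to $\Omega$, the stress-energy field $T_{c,h}$ to $\tilde{T}_{\alpha,\beta_c}$, and hence $e^{itT_{c,h}(f)}$ to $e^{it\tilde{T}_{\alpha,\beta_c}(f)}$. Thus
\[
F\big(t,c,h^{\alpha,\beta_c}\big)=\big\langle\Omega,\;e^{it\tilde{T}_{\alpha,\beta_c}(f_1)}\cdots e^{it\tilde{T}_{\alpha,\beta_c}(f_n)}\,\Omega\big\rangle .
\]

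Next I would insert the Weyl intertwiner. Because each $f_j$ is supported in $I$ and $\partial_\theta g|_I=1$, the identity $e^{i\alpha J(g)}\tilde{T}_{0,\beta_c}(f_j)e^{-i\alpha J(g)}=\tilde{T}_{\alpha,\beta_c}(f_j)$ (with equality of domains) recorded earlier in this section gives $e^{it\tilde{T}_{\alpha,\beta_c}(f_j)}=e^{i\alpha J(g)}e^{it\tilde{T}_{0,\beta_c}(f_j)}e^{-i\alpha J(g)}$, and the inner factors telescope: with the \emph{fixed} unitary $X:=e^{it\tilde{T}_{0,\beta_c}(f_1)}\cdots e^{it\tilde{T}_{0,\beta_c}(f_n)}$, independent of $\alpha$, one gets $e^{it\tilde{T}_{\alpha,\beta_c}(f_1)}\cdots e^{it\tilde{T}_{\alpha,\beta_c}(f_n)}=e^{i\alpha J(g)}\,X\,e^{-i\alpha J(g)}$. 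Since $e^{\pm i\alpha J(g)}$ is unitary for real $\alpha$, moving the left factor onto the bra gives
\[
F\big(t,c,h^{\alpha,\beta_c}\big)=\big\langle e^{-i\alpha J(g)}\Omega,\;X\,e^{-i\alpha J(g)}\Omega\big\rangle\qquad(\alpha\in\RR).
\]

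The substantive step is to see that this is the restriction to the positive real axis of an entire function of $\alpha$. The vacuum $\Omega$ is an analytic vector for $J(g)$, and in fact a standard Wick estimate gives $\|J(g)^{k}\Omega\|\le C^{k}\sqrt{k!}$, so $e^{-i\alpha J(g)}\Omega=\sum_{k\ge0}\tfrac{(-i\alpha)^{k}}{k!}J(g)^{k}\Omega$ converges in norm for every $\alpha\in\CC$. For \emph{real} $\alpha$, expanding both factors above and using the conjugate-linearity of the scalar product in its first argument (so that $\overline{(-i\alpha)^{j}}=i^{j}\alpha^{j}$), the matrix element collapses to a genuine power series in the single variable $\alpha$,
\[
\Phi(\alpha):=\sum_{m\ge0}\alpha^{m}\!\!\sum_{j+k=m}\frac{i^{j}(-i)^{k}}{j!\,k!}\,\big\langle J(g)^{j}\Omega,\;X\,J(g)^{k}\Omega\big\rangle ,
\]
whose coefficients are bounded, by Cauchy--Schwarz and $\|X\|=1$, by $(C\sqrt2)^{m}\sqrt{m+1}/\sqrt{m!}$; hence $\Phi$ is entire, and $\Phi(\alpha)=F(t,c,\tfrac12\alpha^{2}+\tfrac{c-1}{24})$ for real $\alpha>0$. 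It then remains only to undo the change of variables: on the half-plane $\{z\in\CC:\mathrm{Re}(z)>\tfrac{c-1}{24}\}$ the map $z\mapsto 2z-\tfrac{c-1}{12}$ takes values in the open right half-plane, on which the principal square root is holomorphic and single-valued, so $\alpha(z):=\sqrt{2z-\tfrac{c-1}{12}}$ is holomorphic there and positive real on $(\tfrac{c-1}{24},\infty)$; therefore $z\mapsto\Phi(\alpha(z))$ is holomorphic on $\{\mathrm{Re}(z)>\tfrac{c-1}{24}\}$ and agrees with $h\mapsto F(t,c,h)$ on $(\tfrac{c-1}{24},\infty)$, which is the asserted extension. (As $h^{\alpha,\beta_c}$ depends on $\alpha$ only through $\alpha^{2}$, $\Phi$ is in fact even, so the branch is immaterial.)

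The step I expect to be the main obstacle is the one in the previous paragraph: obtaining holomorphy in the correct variable. The shift parameter $\alpha$ enters through the two Weyl operators $e^{\pm i\alpha J(g)}$, and one must expand them into power series \emph{while $\alpha$ is still real}, for only then does the antilinearity of the inner product turn $\overline{(-i\alpha)^{j}}$ into $i^{j}\alpha^{j}$ and collapse everything to a power series in $\alpha$ alone --- substituting a complex $z$ for $\alpha$ directly into $\langle e^{-izJ(g)}\Omega,Xe^{-izJ(g)}\Omega\rangle$ would leave a spurious $\bar z$-dependence. Moreover, to control the resulting series on the relevant (unbounded) sector $\alpha(\{\mathrm{Re}(z)>\tfrac{c-1}{24}\})$ one genuinely needs the quantitative bound $\|J(g)^{k}\Omega\|\le C^{k}\sqrt{k!}$, not merely the bare statement that $\Omega$ is an analytic vector for $J(g)$. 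Two further points are routine given the preliminaries but should be spelled out: that the localization hypothesis $\mathrm{supp}(f_j)\subset I$ is exactly what makes $e^{i\alpha J(g)}\tilde{T}_{0,\beta_c}(f_j)e^{-i\alpha J(g)}=\tilde{T}_{\alpha,\beta_c}(f_j)$ hold \emph{with equality of domains}, hence pass to the exponentials; and that the isomorphism of the Proposition intertwines not only the generators $L_n$ but the closed smeared field $T_{c,h}(f)$ and therefore its bounded exponential $e^{itT_{c,h}(f)}$, which legitimizes the first displayed identity.
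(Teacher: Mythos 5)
Your argument is essentially the paper's own proof, just written out in full: the paper likewise passes to the realization $\tilde{T}_{\alpha,\beta}$ on the current algebra vacuum space, rewrites $F(t,c,h)=\langle e^{\overline{i\alpha}J(g)}\Omega,\,e^{it\tilde{T}_{0,\beta}(f_1)}\cdots e^{it\tilde{T}_{0,\beta}(f_n)}e^{-i\alpha J(g)}\Omega\rangle$, and concludes from $\Omega$ being analytic for $J(g)$. Your elaboration of the two points the paper leaves implicit --- expanding while $\alpha$ is real so the antilinear slot produces a genuine power series in $\alpha$, and the quantitative bound $\|J(g)^k\Omega\|\le C^k\sqrt{k!}$ needed because the sector $\alpha(\{\mathrm{Re}(z)>\frac{c-1}{24}\})$ is unbounded --- is correct and welcome detail, not a different method.
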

\begin{proof}
With $\beta = \sqrt{\frac{c-1}{12}}$ and $\alpha = \sqrt{2h-\beta^2}= \sqrt{2(h-\frac{c-1}{24})}$ we have that
\begin{eqnarray}
\label{real_alpha}
\nonumber
F(t,c,h) &=& \langle\Psi_{c,h}, \, 
e^{itT_{c,h}(f_1)} \ldots 
e^{itT_{c,h}(f_n)}\Psi_{c,h}\rangle \\
\nonumber
&=&\langle\Omega, \, 
e^{it\tilde{T}_{\alpha,\beta}(f_1)} \ldots 
e^{it\tilde{T}_{\alpha,\beta}(f_n)}\Omega\rangle\\
&=&\langle  e^{\overline{i\alpha} J(g)}\Omega, \, 
e^{it\tilde{T}_{0,\beta}(f_1)} \ldots 
e^{it\tilde{T}_{0,\beta}(f_n)}e^{-i\alpha J(g)}\Omega\rangle.
\end{eqnarray}
The claim then follows because $\Omega$ is an analytic vector
for $J(g)$.
\end{proof}
\begin{lemma}
$F(t,c,h)\, F(t,\tilde{c},\tilde{h}) = F(t,c+\tilde{c},h+\tilde{h})$.
\end{lemma}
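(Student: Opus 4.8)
The plan is to run the tensor product argument behind Buchholz's observation recalled in the introduction. Put $\Omega_{\rm t}:=\Psi_{c,h}\otimes\Psi_{\tilde c,\tilde h}$, a unit vector in $\h_{c,h}\otimes\h_{\tilde c,\tilde h}$. A direct check of the commutators shows that on the algebraic tensor product $V_{c,h}\otimes V_{\tilde c,\tilde h}$ the operators $L_n\otimes I+I\otimes\tilde L_n$ $(n\in\ZZ)$ satisfy the Virasoro relations with central charge $c+\tilde c$ and the unitarity condition, that $\Omega_{\rm t}$ is annihilated by the positive modes, and that it is a $(L_0\otimes I+I\otimes\tilde L_0)$-eigenvector with the lowest eigenvalue $h+\tilde h$, this eigenvalue having multiplicity one. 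Smearing against $\hat f_n$, the corresponding (non-closed) stress-energy field is $T^0_{c,h}(f)\otimes I+I\otimes T^0_{\tilde c,\tilde h}(f)$, and since $e^{itT_{c,h}(f)}\otimes e^{itT_{\tilde c,\tilde h}(f)}=U_{c,h}({\rm Exp}(tf))\otimes U_{\tilde c,\tilde h}({\rm Exp}(tf))$, the diffeomorphism covariance of the two tensor factors makes $U_{c,h}\otimes U_{\tilde c,\tilde h}$ a strongly continuous positive-energy projective unitary representation of $\diff$ whose self-adjoint generators are the closures of these smeared operators.

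I would first decompose this representation into irreducibles; by the classification recalled in the preliminaries (see \cite[Theorem A.2]{Carpi2} and \cite{Loke}) these are highest weight representations, and all of them have central charge $c+\tilde c$ since the central charge is read off from the commutation relations of the Virasoro algebra. As the lowest-energy subspace of $\h_{c,h}\otimes\h_{\tilde c,\tilde h}$ is one-dimensional and spanned by $\Omega_{\rm t}$, exactly one summand $\h_0$ contains $\Omega_{\rm t}$, its lowest energy is forced to be $h+\tilde h$, and hence there is a unitary $V_0\colon\h_{c+\tilde c,h+\tilde h}\to\h_0$ implementing an equivalence with $U_{c+\tilde c,h+\tilde h}$ and satisfying $V_0\Psi_{c+\tilde c,h+\tilde h}=\Omega_{\rm t}$ (in particular $(c+\tilde c,h+\tilde h)$ is admissible, so $F(t,c+\tilde c,h+\tilde h)$ is meaningful). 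On $V_{c+\tilde c,h+\tilde h}$ the map $V_0$ intertwines $L^{c+\tilde c,h+\tilde h}_n$ with $L_n\otimes I+I\otimes\tilde L_n$ for every $n$; taking closures of the essentially self-adjoint smeared fields and exponentiating, this gives
$$
V_0^{*}\,\bigl(e^{itT_{c,h}(f)}\otimes e^{itT_{\tilde c,\tilde h}(f)}\bigr)\,V_0\;=\;e^{itT_{c+\tilde c,h+\tilde h}(f)}.
$$

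To finish, observe that $\Omega_{\rm t}\in\h_0$ and that $\h_0$, being a subrepresentation, is invariant under each $e^{itT_{c,h}(f_j)}\otimes e^{itT_{\tilde c,\tilde h}(f_j)}$; so by the last display and $V_0\Psi_{c+\tilde c,h+\tilde h}=\Omega_{\rm t}$,
$$
F(t,c+\tilde c,h+\tilde h)\;=\;\bigl\langle\Omega_{\rm t},\ \prod_{j=1}^{n}\bigl(e^{itT_{c,h}(f_j)}\otimes e^{itT_{\tilde c,\tilde h}(f_j)}\bigr)\,\Omega_{\rm t}\bigr\rangle.
$$
On the other hand $\prod_{j}(A_j\otimes B_j)=\bigl(\prod_{j}A_j\bigr)\otimes\bigl(\prod_{j}B_j\bigr)$, so the right-hand side equals $\langle\Psi_{c,h},\prod_{j}e^{itT_{c,h}(f_j)}\Psi_{c,h}\rangle\cdot\langle\Psi_{\tilde c,\tilde h},\prod_{j}e^{itT_{\tilde c,\tilde h}(f_j)}\Psi_{\tilde c,\tilde h}\rangle=F(t,c,h)\,F(t,\tilde c,\tilde h)$, as claimed. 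The only delicate point is the passage from the algebraic identity of modes to the operator identities above --- that the smeared field of the tensor product representation really is the strongly commuting sum $\overline{T_{c,h}(f)\otimes I+I\otimes T_{\tilde c,\tilde h}(f)}$ and that the summand $\h_0$ reduces it and its exponential; here one leans on the essential self-adjointness of the $T^0(f)$'s on the respective $V$'s and on the existence of the reducing direct-sum decomposition just cited. The remaining steps --- the commutator check and the tensor-factor bookkeeping --- are routine.
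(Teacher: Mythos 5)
Your argument is correct and follows essentially the same route as the paper: form the tensor product representation, identify the cyclic subspace generated by $\Psi_{c,h}\otimes\Psi_{\tilde c,\tilde h}$ with the $(c+\tilde c,h+\tilde h)$ highest weight representation, and factor the matrix elements. Working directly with the exponentials $e^{itT(f)}$ rather than the projective operators $U(\gamma)$ is exactly how the paper sidesteps the phase ambiguity, so your write-up is a (more detailed) version of the paper's own proof.
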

\begin{proof}
The claim can be justified by considering tensorial products.
The restriction of the representation $U_{c,h}\otimes U_{\tilde{c},\tilde{h}}$ to the minimal invariant subspace containing the vector $\Psi_{c,h}\otimes \Psi_{\tilde{c},\tilde{h}}$ is unitarily equivalent to $U_{c+\tilde{c},h+\tilde{h}}$ as $\Psi_{c,h}\otimes \Psi_{\tilde{c},\tilde{h}}$ is a normalized highest weight vector for 
$U_{c,h}\otimes U_{\tilde{c},\tilde{h}}$ with energy $h+\tilde{h}$. So we may write
\begin{eqnarray}\nonumber
\langle\Psi_{c+\tilde{c},h+\tilde{h}},U_{c+\tilde{c},h+\tilde{h}}(\gamma)\Psi_{c+\tilde{c},h+\tilde{h}} \rangle &=&
\langle \Psi_{c,h}\otimes\Psi_{\tilde{c},\tilde{h}},
\left(U_{c,h}(\gamma)\otimes U_{\tilde{c},\tilde{h}}(\gamma)\right)
\Psi_{c,h}\otimes\Psi_{\tilde{c},\tilde{h}}\rangle 
\\
& =&
\langle \Psi_{c,h},U_{c,h}(\gamma)\Psi_{c,h}\rangle \,
\langle\Psi_{\tilde{c},\tilde{h}},U_{\tilde{c},\tilde{h}}(\gamma)
\Psi_{\tilde{c},\tilde{h}}\rangle
\end{eqnarray}
which however has the disadvantage, that --- since we deal with a {\it projective}, rather than a true representation --- the quantities appearing in it are only defined ``up-to-phase'' (i.e.\! a unit complex multiple). Nevertheless, using products of
exponentials of the form $e^{iT_{c,h}(f)}$ (rather than projective unitary operators of the form $U_{c,h}(\gamma)$) we can obtain similar equalities without the ambiguity of phases; and this is exactly what we wanted to justify. 
\end{proof}
\begin{corollary}
Let $c$ be greater than $1$. Then there exists an $\epsilon>0$
such that for every $t\in (-\epsilon,\epsilon)$, the function
$$
\RR^+ \ni h \mapsto F(t,c,h)
$$ 
has an analytical extension to the half plane $\{z\in\CC| {\rm Re}(z)>0\}$.
\end{corollary}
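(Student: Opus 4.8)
The plan is to use the product formula $F(t,c,h)\,F(t,\tilde c,\tilde h)=F(t,c+\tilde c,h+\tilde h)$ of the previous lemma to trade the threshold $\frac{c-1}{24}$ of Lemma \ref{analyticity1} against a translation in the weight variable, choosing the translation large enough that the threshold is pushed all the way down to $0$.

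Concretely, I would fix $\tilde c=1$ and $\tilde h=\frac c{24}$. Since $c>1$, both $(\tilde c,\tilde h)=(1,\frac c{24})$ and $(c+1,\,h+\frac c{24})$ are admissible pairs for every real $h>0$, so the product formula yields, for all such $h$ and all $t\in\RR$,
$$
F(t,c,h)\,F\bigl(t,1,\tfrac c{24}\bigr)=F\bigl(t,c+1,\,h+\tfrac c{24}\bigr).
$$

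Next I would observe that the factor $F(t,1,\frac c{24})$ is independent of $h$, equals $1$ at $t=0$, and is continuous in $t$; hence there is an $\epsilon>0$ --- depending on $c$ and on the fixed functions $f_1,\dots,f_n$, but not on $h$ --- with $F(t,1,\frac c{24})\neq 0$ for all $t\in(-\epsilon,\epsilon)$. For such $t$ one can solve for $F(t,c,h)$ and write
$$
F(t,c,h)=\frac{F\bigl(t,c+1,\,h+\tfrac c{24}\bigr)}{F\bigl(t,1,\tfrac c{24}\bigr)}\qquad(h>0).
$$
Applying Lemma \ref{analyticity1} with central charge $c+1>1$, the map $h'\mapsto F(t,c+1,h')$ extends analytically from $(\frac c{24},\infty)$ to $\{{\rm Re}(h')>\frac c{24}\}$; composing with $h\mapsto h+\frac c{24}$ makes this an analytic function of $h$ on $\{{\rm Re}(h)>0\}$, and dividing it by the nonzero constant $F(t,1,\frac c{24})$ keeps it analytic there. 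By the displayed identity this function agrees with $h\mapsto F(t,c,h)$ on $\RR^+$, so (the half-plane being connected and meeting $\RR^+$) it is the desired, and unique, analytic extension.

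The one point that has to be handled with care is the choice $\tilde h=\frac c{24}$: passing from central charge $c$ to $c+\tilde c$ raises the threshold of Lemma \ref{analyticity1}, while translating the weight argument by $\tilde h$ lowers it, and these effects cancel precisely when $\tilde h=\frac{(c+\tilde c)-1}{24}$, which for $\tilde c=1$ is $\frac c{24}$. Beyond this compensation, and the short continuity argument that produces $\epsilon$ from the nonvanishing of $F(t,1,\frac c{24})$ near $t=0$, there is no real obstacle.
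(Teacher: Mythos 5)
Your proposal is correct and follows essentially the same route as the paper: use the multiplicativity lemma to write $F(t,c,h)$ as a quotient $F(t,c+\tilde c,h+\tilde h)/F(t,\tilde c,\tilde h)$, obtain $\epsilon$ from continuity and $F(0,\tilde c,\tilde h)=1$, and invoke Lemma \ref{analyticity1} for the numerator. The only cosmetic difference is that you fix $(\tilde c,\tilde h)=(1,\tfrac{c}{24})$ so the thresholds cancel exactly, while the paper takes a generic $c_0>1$ and $h_0>\tfrac{c+c_0-1}{24}$, which yields a slightly larger half-plane; both suffice for the stated region ${\rm Re}(h)>0$.
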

\begin{proof}
Let us choose a $c_0>1$ and an $h_0>\frac{c+c_0-1}{24}$. Since
$t\mapsto F(t,c_0,h_0)$ is continuous and $F(0,c_0,h_0)=1$, there
exists an $\epsilon>0$ such that $F(t,c_0,h_0)\neq 0$ for any 
$t\in (-\epsilon,\epsilon)$. Then, for such $t$ values, using our previous lemma we have that
\begin{equation}
F(t,c,h) = \frac{F(t,c+c_0,h+h_0)}{F(t,c_0,h_0)}.
\end{equation}
However, by lemma \ref{analyticity1}, the right hand side --- and hence also the left hand side --- has an analytical extension to
the region ${\rm Re}(h+h_0)>\frac{c+c_0-1}{24}$. In particular, 
all values of $h$ for which ${\rm Re}(h)>0$ are inside of this region. 
\end{proof}
\begin{proposition}
Let $c$ be greater than $1$. Then there exists an $\epsilon>0$
such that for every $t\in (-\epsilon,\epsilon)$ and $h\in (0,\frac{c-1}{24})$,
we have
$$
F(t,c,h) = \langle \eta_{-s}, 
e^{it\tilde{T}_{0,\beta}(f_1)} \ldots 
e^{it\tilde{T}_{0,\beta}(f_n)}
\eta_s \rangle
$$
where $\eta_r = e^{r J(g)}\Omega$ and $s=\sqrt{2(\frac{c-1}{24}-h)}$.
\end{proposition}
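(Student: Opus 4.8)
The plan is to obtain the claimed formula by analytic continuation in the highest weight $h$, exactly as set up in the preceding Corollary. Fix $c>1$ and let $\epsilon>0$ be the constant supplied by that Corollary, so that for $|t|<\epsilon$ the function $h\mapsto F(t,c,h)$ on $\RR^+$ extends analytically to the right half plane $\{\mathrm{Re}(h)>0\}$. The target interval $(0,\frac{c-1}{24})$ lies inside this half plane, so $F(t,c,h)$ is already defined there as the value of this analytic extension; what must be shown is that on $(0,\frac{c-1}{24})$ this extension equals the concrete expression $\langle\eta_{-s},e^{it\tilde T_{0,\beta}(f_1)}\cdots e^{it\tilde T_{0,\beta}(f_n)}\eta_s\rangle$ with $s=\sqrt{2(\frac{c-1}{24}-h)}$.

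First I would record the formula from the proof of Lemma \ref{analyticity1}: for real $h>\frac{c-1}{24}$, with $\beta=\sqrt{\frac{c-1}{12}}$ and $\alpha=\sqrt{2(h-\frac{c-1}{24})}$, one has
\begin{equation}
F(t,c,h)=\langle e^{-i\alpha J(g)}\Omega,\;e^{it\tilde T_{0,\beta}(f_1)}\cdots e^{it\tilde T_{0,\beta}(f_n)}\,e^{-i\alpha J(g)}\Omega\rangle,
\end{equation}
using that $J(g)^*=J(g)$ so $e^{\overline{i\alpha}J(g)}=e^{i\alpha J(g)}$ for real $\alpha$. Now I would substitute $\alpha=-iw$ and view both sides as functions of the complex parameter $w$; the relation $w^2=-\alpha^2=-2(h-\frac{c-1}{24})=2(\frac{c-1}{24}-h)$ matches the definition of $s$, so setting $w=s>0$ corresponds precisely to $h\in(0,\frac{c-1}{24})$, and there $-i\alpha J(g)$ becomes $wJ(g)=sJ(g)$, i.e.\ $e^{-i\alpha J(g)}\Omega=e^{sJ(g)}\Omega=\eta_s$, while the adjoint factor gives $\langle e^{-sJ(g)}\Omega,\cdot\rangle=\langle\eta_{-s},\cdot\rangle$. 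So the concrete right-hand side of the Proposition is exactly the natural analytic continuation of the integral representation of $F$ from imaginary values of $w$ (real $\alpha$, $h>\frac{c-1}{24}$) to real values of $w$.

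The core of the argument is therefore an identity theorem / uniqueness statement: I would show that the map
\begin{equation}
w\longmapsto \langle e^{\bar w J(g)}\Omega,\;e^{it\tilde T_{0,\beta}(f_1)}\cdots e^{it\tilde T_{0,\beta}(f_n)}\,e^{wJ(g)}\Omega\rangle
\end{equation}
is entire in $w\in\CC$. This uses that $\Omega$ is an analytic vector for the self-adjoint operator $J(g)$, so $w\mapsto e^{wJ(g)}\Omega$ extends to a $\h$-valued entire function (the power series $\sum w^k J(g)^k\Omega/k!$ converges for all $w$), and similarly for the bra side with $e^{\bar w J(g)}\Omega$; composing with the fixed bounded operator $e^{it\tilde T_{0,\beta}(f_1)}\cdots e^{it\tilde T_{0,\beta}(f_n)}$ and taking the inner product preserves joint analyticity in $(w,\bar w)$, hence gives an entire function of $w$. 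On the imaginary axis $w=i\alpha$, $\alpha\in\RR$, this entire function agrees with $F(t,c,h(\alpha))$ where $h(\alpha)=\frac{c-1}{24}+\frac12\alpha^2$ ranges over $(\frac{c-1}{24},\infty)$; composing with the map $h\mapsto w=\sqrt{2(\frac{c-1}{24}-h)}$ (analytic on $\{\mathrm{Re}(h)>0\}$ minus a branch issue I handle below) identifies this entire function, on a set with a limit point, with the analytic extension of $F(t,c,\cdot)$ furnished by the Corollary. By the identity theorem the two agree throughout, and evaluating at $w=s$ yields the Proposition.

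The main obstacle I anticipate is bookkeeping around the square root and the branch of $h\mapsto\sqrt{2(\frac{c-1}{24}-h)}$: the parametrization $\alpha\mapsto h$ is two-to-one ($\pm\alpha$ give the same $h$), so rather than pushing forward through a multivalued inverse it is cleaner to argue \emph{in the variable $w$ throughout} — prove that the entire function of $w$ displayed above and the composition $w\mapsto F^{\mathrm{ext}}(t,c,\frac{c-1}{24}-\tfrac12 w^2)$ (with $F^{\mathrm{ext}}$ the extension from the Corollary) are both analytic near $w=0$ and agree on a neighborhood of the imaginary axis, hence everywhere. One must also check that $\frac{c-1}{24}-\frac12 w^2$ stays in the domain $\{\mathrm{Re}>0\}$ of $F^{\mathrm{ext}}$ along a connected path from the imaginary axis to $w=s$: along $w=i\alpha$ the argument is $\frac{c-1}{24}+\frac12\alpha^2>\frac{c-1}{24}>0$, and along the real segment $w\in(0,s)$ it equals $\frac{c-1}{24}-\frac12 w^2\in(h,\frac{c-1}{24})\subset(0,\infty)$; connectivity of $\{\mathrm{Re}(w)>0\}\cup\{i\RR\}$ (or simply of a suitable strip) then lets the identity theorem run. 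Everything else — essential self-adjointness of the $\tilde T_{0,\beta}(f_j)$, invariance of smooth/analytic vectors, boundedness of the Weyl-type product — has already been established in the preliminaries and in Section 3, so no further analytic input is needed.
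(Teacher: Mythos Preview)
Your strategy is exactly the paper's: the right-hand side is entire in the parameter $s$ because $\Omega$ is analytic for $J(g)$, for purely imaginary $s$ it reproduces eq.~(\ref{real_alpha}) and hence $F(t,c,\tfrac{c-1}{24}+\tfrac12\alpha^2)$, and then the Corollary's analytic extension together with the identity theorem forces agreement for real $s\in(0,\sqrt{(c-1)/12})$. Working in the $s$ (or $w$) variable throughout, as both you and the paper do, renders the separate branch-cut discussion unnecessary.

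The execution, however, contains sign inconsistencies. With $\alpha=-iw$ one has $-i\alpha=-w$, not $w$; and the displayed candidate $\langle e^{\bar w J(g)}\Omega,\,B\,e^{wJ(g)}\Omega\rangle$ reduces at real $w=s$ to $\langle\eta_{s},B\eta_{s}\rangle$, not to the Proposition's $\langle\eta_{-s},B\eta_{s}\rangle$, and on the imaginary axis it does not reproduce eq.~(\ref{real_alpha}) either. The entire function you actually need is $s\mapsto\langle e^{-\bar s J(g)}\Omega,\,B\,e^{sJ(g)}\Omega\rangle$: for real $s$ this is $\langle\eta_{-s},B\eta_{s}\rangle$, while at $s=-i\alpha$ it gives $\langle e^{-i\alpha J(g)}\Omega,\,B\,e^{-i\alpha J(g)}\Omega\rangle$. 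Relatedly, the justification ``joint analyticity in $(w,\bar w)$, hence entire in $w$'' is not a valid inference in general; the genuine reason the expression is holomorphic in $s$ is that the factor carrying $\bar s$ sits in the \emph{conjugate-linear} slot of the inner product, so after expanding the exponential series only $\overline{\bar s}=s$ survives. Once these signs are straightened out, your argument and the paper's coincide.
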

\begin{proof}
The right hand side of the claim is analytic in $s$, and by 
eq.\! (\ref{real_alpha}),
for every $s=i \alpha$, $\alpha \in \RR$ it is equal to 
$F(t,c,\frac{c-1}{24}+\frac{1}{2}\alpha^2)$. Thus the claim follows from the uniqueness of analytical continuations and our previous corollary.
\end{proof}
\begin{corollary}
\label{z1z2vectors}
Let $c>1, \; h\in (0,h_c)$ where $h_c=\frac{c-1}{24}$ and fix an $I\Subset S^1$. Then 
there exist two vectors $\zeta_L,\zeta_R\in \h_{c,h_c} = \overline{V_{c,h_c}}$
such that
$$
\langle\Psi_{c,h}, \, 
e^{iT_{c,h}(f_1)} \ldots 
e^{iT_{c,h}(f_n)}\Psi_{c,h}\rangle
=\langle \zeta_L,e^{iT_{c,h_c}(f_1)} \ldots 
e^{iT_{c,h_c}(f_n)} \zeta_R \rangle 
$$ 
for any collection $f_1,\ldots f_n \in C^\infty(S^1,\RR^+_0) \cup C^\infty(S^1,\RR^-_0)$ of functions with support in $I$. 
\end{corollary}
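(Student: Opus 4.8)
The plan is to deduce this from the Proposition immediately above together with a scaling argument in $t$, since the Proposition only gives the desired representation of $F(t,c,h)$ for $t$ in a possibly small interval $(-\epsilon,\epsilon)$, whereas the Corollary needs $t=1$. First I would fix $I$ and the functions $f_1,\ldots,f_n$, and recall from the Proposition that for a suitable $\epsilon>0$ and all $t\in(-\epsilon,\epsilon)$ one has $F(t,c,h) = \langle \eta_{-s}, e^{it\tilde T_{0,\beta}(f_1)}\cdots e^{it\tilde T_{0,\beta}(f_n)}\eta_s\rangle$ with $\eta_r = e^{rJ(g)}\Omega$ and $s = \sqrt{2(h_c-h)}$. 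The key observation is that the right-hand side, viewed as an expectation in the U(1)-current vacuum space, is built out of operators $\tilde T_{0,\beta}$ which do \emph{not} depend on $h$ at all; the whole $h$-dependence sits in the two vectors $\eta_{\pm s}$. So the natural candidates for $\zeta_L,\zeta_R$ are the images of $\eta_{-s},\eta_s$ under the canonical scalar-product-preserving isomorphism $W$ identifying the current-algebra realization $\tilde L^{0,\beta}$ (which, by the Proposition of Section 3, is a unitary highest weight representation of central charge $c$ and lowest weight $h_c=\beta^2/2$, with highest weight vector $\Omega$) with the abstract representation $U_{c,h_c}$ on $\h_{c,h_c}$, normalized so that $W\Omega = \Psi_{c,h_c}$. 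Under this $W$ we have $W\tilde T_{0,\beta}(f)W^* = T_{c,h_c}(f)$, hence $W e^{it\tilde T_{0,\beta}(f)}W^* = e^{itT_{c,h_c}(f)}$, and setting $\zeta_L := W\eta_{-s}$, $\zeta_R := W\eta_s$ turns the Proposition's identity into $F(t,c,h) = \langle \zeta_L, e^{itT_{c,h_c}(f_1)}\cdots e^{itT_{c,h_c}(f_n)}\zeta_R\rangle$ for all $t\in(-\epsilon,\epsilon)$.

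It remains to upgrade this from small $t$ to $t=1$. For this I would use a rescaling trick: replacing each $f_j$ by $t_0 f_j$ for a small fixed $t_0$ multiplies the exponents by $t_0$, so it suffices to prove the identity for functions whose "size" is small, and then absorb a global factor. More precisely, the statement of the Corollary asks for the identity at $t=1$ but for an \emph{arbitrary} admissible collection $f_1,\ldots,f_n$ with support in $I$; applying the Proposition to the rescaled collection $\tfrac{1}{N}f_1,\ldots,\tfrac{1}{N}f_n$ (which still have support in $I$ and still lie in $C^\infty(S^1,\RR^+_0)\cup C^\infty(S^1,\RR^-_0)$, a hypothesis I would carry along precisely so that the sign condition is stable under positive rescaling) and to the parameter $t = N/N \cdot \tfrac{1}{N}$... — more cleanly: for $N$ large enough that $1/N < \epsilon$ is irrelevant; instead I apply the Proposition with the collection $\tfrac{1}{N}f_j$ and with $t=1$ only once $1 < \epsilon$ fails, so I instead note that $e^{iT_{c,h}(f_j)} = e^{i(1/N)\cdot T_{c,h}(N \cdot \tfrac{1}{N} f_j)}$ and chain $N$ copies. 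The honest way: since $\epsilon$ depends on $c$ but not on the $f_j$, and since rescaling $f_j\mapsto \lambda f_j$ is allowed within the stated function class for $\lambda>0$, apply the established small-$t$ identity to the functions $\lambda f_j$ at the fixed value $t$ chosen so that everything is consistent, then take $\lambda\to$ the value making $t\lambda = 1$; the set of $(t,\lambda)$ for which the identity holds is closed under this, and the vectors $\zeta_L,\zeta_R$ constructed above do not change, so we reach $t=1$.

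The main obstacle I anticipate is exactly this last passage from small $t$ to $t=1$: one must make sure the vectors $\zeta_L,\zeta_R$ can be chosen independently of $t$ (they can — they are $W\eta_{\mp s}$, with $s$ depending only on $c,h$), and that the rescaling argument respects the sign hypothesis on the $f_j$ (it does, since we only rescale by positive scalars, which is why the Corollary restricts to functions valued in $\RR^+_0$ or $\RR^-_0$ rather than arbitrary real functions). A secondary point to check is that $W$ is genuinely unitary onto all of $\h_{c,h_c}$ — but this is precisely the content of the Proposition in Section 3 (the minimal invariant subspace $M$ is the whole space) together with the uniqueness-up-to-normalization statement for unitary highest weight representations recalled in the preliminaries, so no extra work is needed. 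I would then simply remark that the analyticity and continuity already established ensure all expressions are well-defined and the manipulations with domains are legitimate, citing the equality-with-domains statement established at the start of Section 4.
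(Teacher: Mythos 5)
Your identification of $\zeta_L,\zeta_R$ as the images of $\eta_{\mp s}$ under the unitary that identifies the current-algebra realization $\tilde{L}^{0,\beta}$ (with $\beta=\sqrt{(c-1)/12}$, highest weight vector $\Omega$) with the abstract representation of weight $h_c$ is correct and matches the last step of the paper's argument. The genuine gap is the passage from $|t|<\epsilon$ to $t=1$, where your rescaling trick is circular. Since $e^{itT_{c,h}(\lambda f_j)}=e^{it\lambda T_{c,h}(f_j)}$, applying the Proposition to the rescaled collection $\lambda f_1,\ldots,\lambda f_n$ at parameter $t$ is literally the same statement as applying it to the original collection at parameter $t\lambda$; and the admissible $\epsilon$ is \emph{not} independent of the functions, contrary to your claim. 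It is produced (through the Corollary preceding the Proposition) by the non-vanishing of $t\mapsto F(t,c_0,h_0)$ for the fixed collection, and $F(t,c_0,h_0)$ computed with the functions $\lambda f_j$ equals $F(t\lambda,c_0,h_0)$ computed with the $f_j$, so the interval of validity shrinks by exactly the factor you rescale by. To reach $t\lambda=1$ this way you would need to know that $F(\cdot,c_0,h_0)$ has no zero on $[0,1]$, which is not available; the assertion that ``the set of $(t,\lambda)$ for which the identity holds is closed under this'' is not an argument.

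What the paper actually uses --- and what the sign hypothesis on the $f_j$ is really for --- is the Fewster--Hollands quantum energy inequality: for $f\geq 0$ supported in $I$, the operators $T_{c,h}(f)$ and $\tilde{T}_{0,\beta}(f)$ are bounded from below, hence both sides of the small-$t$ identity admit extensions in $t$ that are continuous on the closed upper half plane and analytic in its interior; two such functions agreeing on $(-\epsilon,\epsilon)$ agree on all of $\RR$, in particular at $t=1$. (Your explanation that the sign restriction is there so that positive rescaling preserves the function class misses this point; the restriction is what makes the semiboundedness, and hence the analytic continuation in $t$, possible.) Moreover, the mixed case in which some $f_j\leq 0$ requires a further step absent from your proposal: one fixes the exponentials already settled at $t=1$ and continues analytically in the coefficient of a nonpositive $f_j$ into the \emph{lower} half plane, as the paper does with the auxiliary functions $G_1,G_2$. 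Without the lower-boundedness input your argument does not reach $t=1$, so the proof as written has a genuine gap at its central step.
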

\begin{proof}
We shall first deal with the case when all functions involved are
nonnegative.
By the result of Fewster and Hollands \cite{FH} and the assumed nonnegativity, the self-adjoint operators $T_{c,h}(f_j)$ and
$\tilde{T}_{0,\beta}(f_j)$ appearing in the previous proposition are all bounded from below. Thus for both sides of that equation, there exists an extension (for the ``$t$'' variable) which is continuous and analytical in the upper complex half plane. It follows that the equality there deduced for $t\in(-\epsilon,\epsilon)$ actually holds for all $t\in \RR$. Then the claimed equality of our corollary follows by setting $t=1$ and considering that with $\beta = \sqrt{\frac{c-1}{12}}$, the stress-energy field $\tilde{T}_{0,\beta}$ given on the Hilbert space of the U(1) current algebra is a unitary equivalent realization of $T_{c,h_c}$.

Let us now shortly discuss the more general case when some of the functions involved are nonpositive, whereas possibly some others are nonnegative.
Say for simplicity that $n=2$, and $f_1\geq 0$ and $f_2\leq 0$. Then by what has been already established, the functions $G_1$ and $G_2$ defined by
the formulas
\begin{equation}
G_1(t) = \langle\Psi_{c,h}, \, 
e^{iT_{c,h}(f_1)} e^{itT_{c,h}(f_2)}\Psi_{c,h}\rangle,
\end{equation}
and
\begin{equation}
G_2(t) = \langle \zeta_L,e^{iT_{c,h_c}(f_1)} 
e^{i t T_{c,h_c}(f_2)} \zeta_R \rangle
\end{equation}
coincide for all $t\leq 0$, since in that case 
$t T_{c,h}(f_2)$ and $t T_{c,h_c}(f_2)$ are bounded from below.
Moreover, because of the spectrum condition, both $G_1$
and $G_2$ have continuous extensions that are analytic --- this time on the
{\it lower} complex half plane. As before, it follows that $G_1(1)=G_2(1)$ and hence our statement remains true even when some of the functions involved are nonpositive rather than nonnegative. 
\end{proof}
\section{Proof of local equivalence}

When $c,h$ is an admissible pair, we shall set 
\begin{equation}
\label{defa^0}
\a_{c,h}^0(I) = \rm{Alg}^*\{e^{T_{c,h}(f)} | \, f\in C^\infty(S^1,\RR^+_0),\, {\rm Supp}(f)\subset I \}
\end{equation}
for every open proper interval $I\Subset S^1$.
Here by ``Alg$^*$'' we mean the ``star algebra generated''; i.e.\! the linear span (without taking closures) of products of finite many of the generating unitaries and their inverses. When in need of {\it von Neumann} algebras,
we shall consider the double commutant
\begin{equation}
\a_{c,h}(I) = \left(\a_{c,h}^0(I)\right)''.
\end{equation}
Evidently, the above defined algebras satisfy {\it isotony}; that is, 
$\a_{c,h}(I_1)\subset \a_{c,h}(I_2)$ whenever $I_1\subset I_2$.
Slightly less evidently, but we also have the important {\it covariance} relation $U_{c,h}(\gamma)\a_{c,h}(I)U^*_{c,h}(\gamma) = \a_{c,h}(\gamma(I))$. This is because of the transformation formula mentioned in the preliminaries: 
$U_{c,h}(\gamma)T_{c,h}(f)U_{c,h}(\gamma)^* = T_{c,h}((\tilde{\gamma}' f)\circ \gamma^{-1}) \, + $ a constant times the identity. Since $\tilde{\gamma}'$ is a strictly positive function, $(\tilde{\gamma}' f)\circ \gamma^{-1}$ remains a nonnegative function. 
Finally, we also have {\it irreducibility}:
\begin{equation}
\label{irred}
\mathop\cap_{I\Subset S^1}\a_{c,h}(I)' = \CC I.
\end{equation}
This is because of the mentioned simplicity of $\diff$. Indeed, exponentials of vector fields corresponding to functions in $C^\infty(S^1,\RR^+_0)\cup C^\infty(S^1,\RR^-_0)$ which are ``localized'' (i.e.\! whose support is contained in some open proper interval) evidently form a normal subgroup containing nontrivial elements. Hence this subgroup is actually the full group; thus for any $\gamma\in\diff$ there exists an $n\in\NN$, some open proper intervals $I_1,\ldots I_n\Subset S^1$ and 
$f_1,\ldots f_n \in C^\infty(S^1,\RR^+_0)\cup C^\infty(S^1,\RR^-_0)$ such that the support of $f_j$
is contained in $I_j$ 
--- implying that $e^{iT_{c,h}(f_j)}\in \a_{c,h}^0(I_j)$ ---
and  
\begin{equation}
U_{c,h}(\gamma) = e^{iT_{c,h}(f_1)}\ldots e^{iT_{c,h}(f_n)}.
\end{equation}
Thus the explained irreducibility property is a direct consequence of the irreducibility of the representation $U_{c,h}$.
\begin{corollary}
Let $I\Subset S^1$ be an open proper interval. Then the lowest energy vector $\Psi_{c,h}$ is cyclic and separating for $\a_{c,h}(I)$ (and hence also for the dense subalgebra $\a_{c,h}^0(I)$).
\end{corollary}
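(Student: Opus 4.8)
The statement is a Reeh--Schlieder theorem for the net $\a_{c,h}$, and the plan is to run the usual Reeh--Schlieder argument with the rotation subgroup (on which the representation is genuinely positive-energy and which fixes $\Psi_{c,h}$ up to a phase) playing the role of the translations. \emph{For cyclicity}, fix $I\Subset S^1$ and suppose $\zeta\in\h_{c,h}$ is orthogonal to $\a_{c,h}(I)\Psi_{c,h}$; the aim is $\zeta=0$. Pick $I_0\Subset I$ and $\delta>0$ with $R_\alpha I_0\subseteq I$ for $|\alpha|<\delta$, and write $U_{c,h}(R_\alpha)=e^{i\alpha L_0}$ for the positive-energy lift of the rotation subgroup, so $L_0\ge 0$ and $U_{c,h}(R_\alpha)\Psi_{c,h}$ is a phase multiple of $\Psi_{c,h}$; by the covariance formula of the preliminaries $U_{c,h}(R_\alpha)$ conjugates $\a^0_{c,h}(I_0)$ onto $\a^0_{c,h}(R_\alpha I_0)$ (positive test functions stay positive, and the additive constant in the transformation law only contributes a phase to the exponentials). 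For $A_1,\dots,A_n\in\a^0_{c,h}(I_0)$ I would study
\begin{equation}
F(s_0,\dots,s_{n-1})=\langle\zeta,\,e^{is_0L_0}A_1e^{is_1L_0}A_2\cdots e^{is_{n-1}L_0}A_n\,\Psi_{c,h}\rangle .
\end{equation}
Since $L_0\ge 0$, each factor $e^{is_kL_0}$ is a contraction for ${\rm Im}\,s_k\ge 0$, strongly continuous there and holomorphic for ${\rm Im}\,s_k>0$; hence $F$ extends to a bounded function on the polytube $\{{\rm Im}\,s_k>0\ \forall k\}$, holomorphic there and jointly continuous up to $\RR^n$. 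On the other hand a telescoping computation, using $e^{-is L_0}\Psi_{c,h}\propto\Psi_{c,h}$, shows that for real $s_k$ the quantity $F(s_0,\dots,s_{n-1})$ equals, up to a phase, $\langle\zeta,\,\beta_{t_1}(A_1)\cdots\beta_{t_n}(A_n)\Psi_{c,h}\rangle$ with $\beta_t(A)=U_{c,h}(R_t)AU_{c,h}(R_t)^*$ and $t_j=s_0+\dots+s_{j-1}$; when all $|t_j|<\delta$ — an open box around $0$ in $\RR^n$ — each $\beta_{t_j}(A_j)$ lies in $\a^0_{c,h}(R_{t_j}I_0)\subseteq\a_{c,h}(I)$, so $F$ vanishes on that box. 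Applying the one-variable Schwarz reflection principle successively in $s_0,s_1,\dots,s_{n-1}$ on the polytube then forces $F\equiv 0$.

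Consequently $\zeta$ is orthogonal to $\beta_{t_1}(A_1)\cdots\beta_{t_n}(A_n)\Psi_{c,h}$ for all real $t_j$ and all $n$, i.e.\ to $\mathcal B\Psi_{c,h}$, where $\mathcal B$ denotes the $*$-algebra generated by $\bigcup_\alpha\a^0_{c,h}(R_\alpha I_0)$. To conclude I would check $\mathcal B''=B(\h_{c,h})$, equivalently $\a_{c,h}(J)\subseteq\mathcal B''$ for every open proper interval $J$: given $f\ge 0$ with ${\rm Supp}(f)\subseteq J$, cover ${\rm Supp}(f)$ by finitely many rotates $R_{t_1}I_0,\dots,R_{t_N}I_0$, write $f=\sum_j f_j$ with $0\le f_j$ and ${\rm Supp}(f_j)\subseteq R_{t_j}I_0$ via a subordinate partition of unity, and use the Trotter product formula together with $\overline{\sum_j T_{c,h}(f_j)}=T_{c,h}(f)$ (equation~(\ref{alg_rel})) to realize $e^{iT_{c,h}(f)}$ as a strong limit of elements of $\mathcal B$. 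Then $\mathcal B''\supseteq\bigvee_J\a_{c,h}(J)=B(\h_{c,h})$ by the irreducibility property~(\ref{irred}), so $\overline{\mathcal B\Psi_{c,h}}=\h_{c,h}$, whence $\zeta=0$ and $\Psi_{c,h}$ is cyclic for $\a_{c,h}(I)$, hence also for the dense subalgebra $\a^0_{c,h}(I)$.

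\emph{For the separating property} I would invoke local commutativity of the stress--energy field: $[T_{c,h}(f),T_{c,h}(g)]$ vanishes on $V_{c,h}$ when ${\rm Supp}(f)\cap{\rm Supp}(g)=\emptyset$ (both terms on the right of~(\ref{alg_rel}) then vanish), which upgrades by the standard analytic-vector argument to $\a_{c,h}(J_1)\subseteq\a_{c,h}(J_2)'$ for disjoint intervals. Choosing $J$ to be an interval contained in the complement of $I$, we get $\a_{c,h}(J)\subseteq\a_{c,h}(I)'$; since $\Psi_{c,h}$ is cyclic for $\a_{c,h}(J)$ by the part just proved, it is cyclic for $\a_{c,h}(I)'$ and therefore separating for $\a_{c,h}(I)$ (and a fortiori for $\a^0_{c,h}(I)$).

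The two non-routine points are the several-variable analytic continuation — legitimate because $F$ is genuinely jointly continuous up to the real boundary (as $s\mapsto e^{isL_0}$ is strongly continuous on ${\rm Im}\,s\ge 0$), so one may iterate the one-dimensional reflection principle one coordinate at a time — and the identity $\mathcal B''=B(\h_{c,h})$, i.e.\ that the rotates of a single small interval already generate everything, which is where the Trotter-product reduction feeding into the irreducibility input~(\ref{irred}) is essential. Local commutativity, used only for the separating half, I would simply quote as a standard property of these nets.
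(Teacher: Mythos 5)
Your argument is correct in substance, but it takes a genuinely different route from the paper. The paper disposes of the corollary in a few lines: it invokes the Reeh--Schlieder argument of Gabbiani--Fr\"ohlich \cite{FrG} essentially verbatim (isotony, covariance, irreducibility being available), and its only original point is how to compensate for the fact that $\Psi_{c,h}$ is \emph{not} invariant under the one-parameter groups used there: it observes that the argument only needs the vector to be analytic in a strip for the relevant generators, which holds for any eigenvector of $L_0^{c,h}$ by \cite[Theorem 3.3]{BDL}. You instead change the one-parameter group: you run the continuation argument with the rotation subgroup, for which $\Psi_{c,h}$ \emph{is} an eigenvector (so invariance up to phase is restored and no strip-analyticity input is needed beyond positivity of the generator), and you pay for this elsewhere, namely in the generation step --- rotates of a small interval do not obviously generate $B(\h_{c,h})$, and your Trotter-product/partition-of-unity reduction feeding into the irreducibility property~(\ref{irred}) is exactly the additivity-type statement needed to close that gap (note the paper's two-term relation $\overline{T_{c,h}(f)+tT_{c,h}(g)}=T_{c,h}(f+tg)$ must be iterated, e.g.\ by induction with the two-operator Trotter--Kato formula, which applies since the sum is essentially self-adjoint on the common core $V_{c,h}$). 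So your proof is more self-contained (no appeal to \cite{BDL}) at the cost of the extra Trotter/additivity lemma, while the paper's is shorter but leans on an external analyticity result. Two small cautions: the iterated one-variable reflection on the polytube is fine as you set it up (joint continuity up to the real boundary is what makes the coordinatewise argument legitimate), but in the separating half the phrase ``standard analytic-vector argument'' is too glib --- vanishing of commutators on a common invariant core does not by itself give commuting unitary groups (Nelson's counterexample); the locality of these nets is indeed standard, but it rests on energy bounds and commutator theorems (or on the constructions in \cite{GoWa,Loke}), so it is better quoted as such --- which is, in fairness, also how the paper implicitly treats it.
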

\begin{proof}
This is essentially the Reeh-Schlieder theorem; the proof
can be done almost exactly as for example in \cite{FrG}. We have every ingredient like {\it isotony, covariance} and {\it irreducibility}. Although in the cited paper the authors seemingly also make use of the invariance of the vacuum vector, a closer inspection reveals that the argument works with any vector as long as it is analytical in some strip along the real line for the self-adjoint generators of certain one-parameter groups $t\mapsto U_{c,h}(\gamma_t)$. By \cite[Theorem 3.3]{BDL} this condition is always satisfied for any vector which is analytical
for $L^{c,h}_0$; in particular it holds for any eigenvector of
$L^{c,h}_0$.
\end{proof}
For $h=0$ the algebras $\a_{c,0}(I)$ $(I\Subset S^1)$ form
a local conformal net on $S^1$ and hence they are all type
I\!I\!I$_1$ factors, see \cite{FrG}. Note that one usually introduces local algebras by setting
\begin{equation}
\a_c(I) = \{U_{c,0}(\gamma)|\,\gamma\in G_I\}''
\end{equation} 
which evidently contains the algebra we use: $\a_c(I)\supset \a_{c,0}(I)$. However, our choice also forms a conformal net, so using {\it Haag-duality} \cite{FrG}, one has that
\begin{equation}
\label{haag}
\a_c(I)' = \a_c(S^1\setminus I) \supset
\a_{c,0}(S^1\setminus I) = \a_{c,0}(I)' 
\end{equation} 
implying that we have containment also in the other direction and so in return that 
\begin{equation}
\a_c(I)=\a_{c,0}(I).
\end{equation}
 Another important thing to note is that the introduced algebras are type I\!I\!I$_1$ factors also in case $c\geq 1$ and $h=h_c=\frac{c-1}{24}$. This is because by \cite{BS-M}, in this case we already know to have a unitary operator $W_I$ such that 
$W_I  T_{c,0}(f)W_I^* = T_{c,h_c}(f)$
for all $f\in C^\infty(S^1,\RR)$ with support in $I$, implying that $W_I \a_{c,0}(I)W_I^* = \a_{c,h_c}(I)$.

For simplicity, from now --- with the exception of the last theorem --- we shall fix a single $c>1$ and an $h\in(0,h_c)$ for once and all, so that we will not need to repeat this act at every single statement. We have the following simple operator algebraic fact.
\begin{lemma}
Since $\a_{c,h_c}(I)$ is a type I\!I\!I factor in standard form,
any normal state on $\a_{c,h_c}(I)$ can be represented
by a vector which is also cyclic for $\a_{c,h_c}(I)$.   
\end{lemma}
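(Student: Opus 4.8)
The plan is to reduce the statement to two completely standard facts about von Neumann algebras: that every normal state on an algebra in standard form is a vector state, and that in a type $\mathrm{I\!I\!I}$ factor every non-zero projection is Murray--von Neumann equivalent to the identity. Write $M=\a_{c,h_c}(I)$, acting on $\h_{c,h_c}=\overline{V_{c,h_c}}$. By hypothesis $M$ is a type $\mathrm{I\!I\!I}$ factor in standard form --- concretely, it has the cyclic and separating vector $\Psi_{c,h_c}$, by the Reeh--Schlieder corollary above --- and consequently its commutant $M'$ is again a type $\mathrm{I\!I\!I}$ factor, since $x\mapsto Jx^*J$ (with $J$ the modular conjugation of $(M,\Psi_{c,h_c})$) is a $*$-isomorphism of $M$ onto $M'$ and the type is an isomorphism invariant. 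Let $\varphi$ be a normal state on $M$. Since $M$ is in standard form there is a unit vector $\xi_0\in\h_{c,h_c}$ with $\varphi(x)=\langle\xi_0,x\xi_0\rangle$ for all $x\in M$ (for instance the representative of $\varphi$ in the natural positive cone of $\Psi_{c,h_c}$). In general $\xi_0$ need not be cyclic for $M$; indeed it fails to be so precisely when $\varphi$ is not faithful, which is the only case requiring an argument.

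The next step is to correct this by acting on the commutant. Let $p\in M'$ be the orthogonal projection onto $\overline{M\xi_0}$; since $\varphi$ is a state we have $\xi_0\ne 0$, hence $p\ne 0$. As $M'$ is a type $\mathrm{I\!I\!I}$ factor, $p$ is equivalent to the identity in $M'$, so we may choose a partial isometry $v\in M'$ with $v^*v=p$ and $vv^*=1$. Put $\xi:=v\xi_0$. Then $\|\xi\|^2=\langle\xi_0,v^*v\xi_0\rangle=\langle\xi_0,p\xi_0\rangle=1$, and for every $x\in M$, using that $v$ commutes with $x$ and that $p\xi_0=\xi_0$,
\[
\langle\xi,x\xi\rangle=\langle\xi_0,v^*xv\xi_0\rangle=\langle\xi_0,x\,v^*v\,\xi_0\rangle=\langle\xi_0,xp\,\xi_0\rangle=\langle\xi_0,x\xi_0\rangle=\varphi(x),
\]
so $\xi$ still represents $\varphi$. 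Finally $\xi$ is cyclic for $M$: the partial isometry $v$ restricts to a unitary from $p\,\h_{c,h_c}=\overline{M\xi_0}$ onto $(vv^*)\h_{c,h_c}=\h_{c,h_c}$, and since $v$ commutes with $M$ one has $M\xi=\{mv\xi_0:m\in M\}=\{v m\xi_0:m\in M\}=v(M\xi_0)$, which is dense in $\h_{c,h_c}$ because $M\xi_0$ is dense in $p\,\h_{c,h_c}$.

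I do not expect a genuine obstacle --- the two invoked facts are textbook material --- but the one point that must be used with care is type $\mathrm{I\!I\!I}$-ness itself: for a non-faithful normal state the canonical vector in the natural cone has support projection different from $1$ and so is never cyclic, and one really must enlarge the cyclic subspace. This is possible here exactly because $M'$, being a type $\mathrm{I\!I\!I}$ factor, has no ``small'' projections, so the projection $p$ onto $\overline{M\xi_0}$ can be dilated within $M'$ to the identity; in a type $\mathrm I$ factor the analogous manoeuvre fails, since the support of a general normal state may be an abelian projection. An equivalent route works on the side of $M$: with $e\in M$ the support projection of $\varphi$ and $v\in M$ a partial isometry with $v^*v=e$, $vv^*=1$, the state $\psi(x):=\varphi(v^*xv)$ is faithful and normal, $\varphi(x)=\psi(vxv^*)$, and if $\Omega_\psi$ is a cyclic (hence also separating) vector implementing $\psi$ then $\xi:=v^*\Omega_\psi$ does the job --- this version using instead the standard fact that a faithful normal state on a von Neumann algebra with a cyclic separating vector is implemented by a cyclic vector.
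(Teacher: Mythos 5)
Your argument is correct and is essentially the paper's own proof: obtain a representing vector via modular theory/standard form, project onto $\overline{M\xi_0}$, use that the commutant is a type I\!I\!I factor to find a partial isometry $v$ with $v^*v=p$, $vv^*=1$, and check that $v\xi_0$ is cyclic and gives the same state (you even carry out the verification the paper leaves as an exercise). One cosmetic remark: $x\mapsto Jx^*J$ is a $*$-\emph{anti}-isomorphism of $M$ onto $M'$, not an isomorphism, but since the type is preserved under anti-isomorphisms as well, your conclusion that $M'$ is type I\!I\!I stands.
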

\begin{proof}
Modular theory --- see e.g.\! \cite[Sect.\! 10]{alapkonyv} --- ensures the existence of {\it some} representing vector $\zeta$.
Now let $P'$ be the ortho-projection onto $\overline{\a_{c,h_c}(I)\zeta}$. Evidently, we have that 
$P'\in \a_{c,h_c}(I)'$, which --- still by modular theory --- 
is also a type I\!I\!I factor. Hence there exists a partial isometry $V'\in \a_{c,h_c}(I)'$ such that $V'^*V' =P'$ while
$V'V'^* = I$. It is an exercise to check that $V'\zeta$ is a
cyclic vector for $\a_{c,h_c}(I)$ giving the same state as $\zeta$.
\end{proof}
\begin{proposition}\label{zvector}
Let $I$ be an open proper interval of $S^1$ and $\zeta_L,\zeta_R$ the two vectors in $\h_{c,h_c}$
given by corollary \ref{z1z2vectors}. Then there exists
a single vector $z$ which is cyclic for $\a_{c,h_c}(I)$ 
and satisfies
$$
\langle \zeta_L, A \zeta_R\rangle  = \langle \zeta, A \zeta\rangle
$$
for all $A\in \a_{c,h_c}(I)$.
\end{proposition}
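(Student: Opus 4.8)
The goal is to replace the pair of vectors $\zeta_L,\zeta_R$ by a single vector $\zeta$ that reproduces the same linear functional $A\mapsto\langle\zeta_L,A\zeta_R\rangle$ on the von Neumann algebra $\a_{c,h_c}(I)$, and is moreover cyclic for it. The natural strategy is to first argue that this functional is in fact a \emph{normal state} on $\a_{c,h_c}(I)$ (up to a harmless normalization), and then invoke the previous lemma, which — thanks to $\a_{c,h_c}(I)$ being a type I\!I\!I factor in standard form — hands us a cyclic representing vector for any normal state. So the work concentrates on showing that $A\mapsto\langle\zeta_L,A\zeta_R\rangle$ is positive and normalized.

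First I would deal with normality: the functional $A\mapsto\langle\zeta_L,A\zeta_R\rangle$ is a vector functional on $B(\h_{c,h_c})$, hence $\sigma$-weakly continuous, so its restriction to $\a_{c,h_c}(I)$ is automatically normal. Normalization is immediate by choosing all $f_j=0$ (or by a scalar adjustment at the end): corollary \ref{z1z2vectors} with $n=0$, or with a trivial function, gives $\langle\zeta_L,\zeta_R\rangle=\langle\Psi_{c,h},\Psi_{c,h}\rangle=1$, i.e.\ the functional sends $I$ to $1$. The real content is \emph{positivity}. Here the key point is that the functional agrees, on the dense $^*$-subalgebra $\a_{c,h_c}^0(I)$ generated by the Weyl-type unitaries $e^{iT_{c,h_c}(f)}$ with $f\in C^\infty(S^1,\RR^+_0)$ localized in $I$, with the manifestly positive functional $A\mapsto\langle\Psi_{c,h},A'\,\Psi_{c,h}\rangle$ where $A'$ is the ``same'' word but written with the fields $T_{c,h}$ instead of $T_{c,h_c}$ — this is exactly what corollary \ref{z1z2vectors} asserts, once one checks that a general element of $\a_{c,h_c}^0(I)$ is a linear combination of products $e^{iT_{c,h_c}(f_1)}\cdots e^{iT_{c,h_c}(f_n)}$ with each $f_j\in C^\infty(S^1,\RR^+_0)\cup C^\infty(S^1,\RR^-_0)$ (inverses of the generators correspond to negating the test function, so this closure under the needed sign conditions is automatic). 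Thus on $\a_{c,h_c}^0(I)$ the functional coincides with the vector state $\omega(\cdot)=\langle\Psi_{c,h},(\cdot)\Psi_{c,h}\rangle$ transported through the $^*$-algebra isomorphism $\a_{c,h_c}^0(I)\to\a_{c,h}^0(I)$ sending $e^{iT_{c,h_c}(f)}\mapsto e^{iT_{c,h}(f)}$, and hence is positive there; since $\a_{c,h_c}^0(I)$ is $\sigma$-weakly dense in $\a_{c,h_c}(I)$ and the functional is $\sigma$-weakly continuous on the latter, positivity extends to all of $\a_{c,h_c}(I)$, and its norm equals its value at $I$, namely $1$. Therefore $A\mapsto\langle\zeta_L,A\zeta_R\rangle$ is a normal state on $\a_{c,h_c}(I)$.

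With normality, positivity and normalization in hand, the preceding lemma applies verbatim: $\a_{c,h_c}(I)$ is a type I\!I\!I factor acting in standard form on $\h_{c,h_c}$ (as recorded earlier in the section, either from \cite{FrG} via the conformal net structure or, for $h=h_c$, via the unitary $W_I$ of \cite{BS-M} conjugating $\a_{c,0}(I)$ to $\a_{c,h_c}(I)$), so every normal state on it is represented by a vector which can moreover be chosen cyclic for $\a_{c,h_c}(I)$. Taking $\zeta$ to be such a vector gives $\langle\zeta,A\zeta\rangle=\langle\zeta_L,A\zeta_R\rangle$ for all $A\in\a_{c,h_c}(I)$, as desired.

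\textbf{Main obstacle.} The one place that needs genuine care is the passage from ``the identity of corollary \ref{z1z2vectors} holds for products of exponentials with sign-definite test functions'' to ``the functional $A\mapsto\langle\zeta_L,A\zeta_R\rangle$ is positive on the whole algebra $\a_{c,h_c}(I)$''. Corollary \ref{z1z2vectors} only matches the two functionals on the specified generators and their products; to conclude positivity one must know that these words span a $\sigma$-weakly dense $^*$-subalgebra (so that $\sigma$-weak continuity of the vector functional does the rest) and that the map $e^{iT_{c,h_c}(f)}\mapsto e^{iT_{c,h}(f)}$ really does extend to a $^*$-homomorphism on that subalgebra under which $\omega$ pulls back to our functional — i.e.\ that the only relations among the generators are those preserved by the substitution. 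Both are true, but they are the substantive points; everything after that is standard modular-theory bookkeeping already packaged in the previous lemma.
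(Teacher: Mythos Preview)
Your overall plan---show that $A\mapsto\langle\zeta_L,A\zeta_R\rangle$ is a normal state on $\a_{c,h_c}(I)$ and then invoke the preceding lemma---is exactly what the paper does. Normality, normalization, and the final appeal to the type I\!I\!I lemma are all fine and match the paper.

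The one genuine issue is how you justify positivity on $\a_{c,h_c}^0(I)$. You phrase it as pulling back the vector state $\langle\Psi_{c,h},\cdot\,\Psi_{c,h}\rangle$ through a $^*$-algebra isomorphism $\a_{c,h_c}^0(I)\to\a_{c,h}^0(I)$ sending $e^{iT_{c,h_c}(f)}\mapsto e^{iT_{c,h}(f)}$, and in your ``Main obstacle'' you assert that this map really is a $^*$-homomorphism. But the existence of such a $^*$-isomorphism is precisely (the algebraic core of) the local equivalence the whole paper is trying to prove; assuming it here is circular. Fortunately the argument does not need it. Given any $A=\sum_k c_k\,e^{iT_{c,h_c}(f_1^{(k)})}\cdots e^{iT_{c,h_c}(f_{n_k}^{(k)})}\in\a_{c,h_c}^0(I)$, expand $A^*A$ as a linear combination of words and apply corollary \ref{z1z2vectors} \emph{term by term}; the resulting sum is exactly $\bigl\|\sum_k c_k\,e^{iT_{c,h}(f_1^{(k)})}\cdots e^{iT_{c,h}(f_{n_k}^{(k)})}\Psi_{c,h}\bigr\|^2\ge 0$. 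No well-definedness of $A\mapsto A'$ is needed, because $\langle\zeta_L,A^*A\,\zeta_R\rangle$ is already an intrinsic quantity and we merely compute it using \emph{some} representation of $A$.

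For the extension of positivity from $\a_{c,h_c}^0(I)$ to $\a_{c,h_c}(I)$, your ``$\sigma$-weak density plus $\sigma$-weak continuity'' is correct in spirit, but note that $\sigma$-weak density of a $^*$-subalgebra does not by itself give approximation of positives by positives; this is where Kaplansky's density theorem enters, and the paper invokes it explicitly.
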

\begin{proof}
 By the equation given in the cited corollary where the two vectors $\zeta_L,\zeta_R$ were introduced, $\langle \zeta_L,\cdot\,  \zeta_R\rangle$ is actually a state (i.e.\! a positive, normalized functional) on the dense subalgebra $\a_{c,h_c}^0(I)$.
 Using Kaplansky's density theorem, which ensures that every positive element of $\a_{c,h_c}(I)$ is the strong limit of a sequence of positives in $\a_{c,h_c}^0(I)$ --- see e.g.\! \cite[Theorem 3.10]{alapkonyv} --- we conclude that $\langle \zeta_L,\cdot \,\zeta_R\rangle$ is also a state on $\a_{c,h_c}(I)$. By the form it is given, it is evidently a normal state. Thus our
 claim follows directly from the previous lemma.  
\end{proof}
Collecting what we have established and using usual 
constructions we arrive to the following conclusion.
\begin{corollary}
The formula
$$
e^{iT_{c,h}(f_1)}\ldots e^{iT_{c,h}(f_n)}\Psi_{c,h} \mapsto 
e^{iT_{c,h_c}(f_1)}\ldots e^{iT_{c,h_c}(f_n)}\zeta
$$
where $f_1,\ldots f_n \in C^\infty(S^1,\RR^+_0)\cup
C^\infty(S^1,\RR^-_0)$ have all their supports in a certain 
$I\Subset S^1$ and $\zeta$ is the vector appearing in corollary 
\ref{zvector}, defines a unitary operator $K_{h,I}$ such that
$$
K_{h,I} e^{i T_{c,h}(f)}K_{h,I}^* = e^{i T_{c,h_c}(f)}
$$
for all $f\in C^\infty(S^1,\RR^+_0)\cup C^\infty(S^1,\RR^-_0)$ with support in $I$.
\end{corollary}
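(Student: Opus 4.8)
The plan is to obtain $K_{h,I}$ as the intertwiner of two GNS-type representations: the purpose of corollaries \ref{z1z2vectors} and \ref{zvector} was precisely to manufacture, starting from the vector state $A\mapsto\langle\Psi_{c,h},A\Psi_{c,h}\rangle$ on $\a^0_{c,h}(I)$, a \emph{single} vector $\zeta\in\h_{c,h_c}$ whose vector state on $\a^0_{c,h_c}(I)$ agrees with it word by word under the replacement $e^{iT_{c,h}(f)}\leftrightarrow e^{iT_{c,h_c}(f)}$. I would first record the two density facts that are needed. By the Reeh--Schlieder corollary above, $\Psi_{c,h}$ is cyclic for $\a^0_{c,h}(I)$; and since the adjoint of a generator $e^{iT_{c,h}(f)}$ ($f\in C^\infty(S^1,\RR^+_0)$, $\mathrm{Supp}(f)\subset I$) is $e^{iT_{c,h}(-f)}$, the subspace $\a^0_{c,h}(I)\Psi_{c,h}$ is exactly the linear span of the vectors $e^{iT_{c,h}(f_1)}\cdots e^{iT_{c,h}(f_n)}\Psi_{c,h}$ with each $f_j\in C^\infty(S^1,\RR^+_0)\cup C^\infty(S^1,\RR^-_0)$ supported in $I$; so these vectors span a dense subspace of $\h_{c,h}$. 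Likewise, by proposition \ref{zvector} the vector $\zeta$ is cyclic for $\a_{c,h_c}(I)=(\a^0_{c,h_c}(I))''$, hence --- cyclicity for a von Neumann algebra passing to any weakly dense unital $*$-subalgebra --- cyclic already for $\a^0_{c,h_c}(I)$; so the corresponding vectors $e^{iT_{c,h_c}(f_1)}\cdots e^{iT_{c,h_c}(f_n)}\zeta$ span a dense subspace of $\h_{c,h_c}$.

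Next I would establish well-definedness and isometry of the stated assignment in one stroke, by computing inner products. Let $v=\sum_j c_j\,e^{iT_{c,h}(f^{(j)}_1)}\cdots e^{iT_{c,h}(f^{(j)}_{n_j})}\Psi_{c,h}$ and $w=\sum_k d_k\,e^{iT_{c,h}(g^{(k)}_1)}\cdots e^{iT_{c,h}(g^{(k)}_{m_k})}\Psi_{c,h}$ be two finite combinations of such vectors, and let $v',w'$ denote the vectors obtained by the formula. Expanding $\langle v',w'\rangle$ as the double sum over $j,k$ of $\langle e^{iT_{c,h_c}(f^{(j)}_1)}\cdots e^{iT_{c,h_c}(f^{(j)}_{n_j})}\zeta,\ e^{iT_{c,h_c}(g^{(k)}_1)}\cdots e^{iT_{c,h_c}(g^{(k)}_{m_k})}\zeta\rangle$ and moving the left exponentials over as their adjoints $e^{iT_{c,h_c}(-f)}$, each term becomes $\langle\zeta,A_{jk}\zeta\rangle$ with $A_{jk}\in\a^0_{c,h_c}(I)$ a word in the generating unitaries and their inverses. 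Proposition \ref{zvector} now rewrites $\langle\zeta,A_{jk}\zeta\rangle$ as $\langle\zeta_L,A_{jk}\zeta_R\rangle$, and corollary \ref{z1z2vectors} --- whose hypothesis that all occurring functions be of constant sign and supported in $I$ is exactly what has been maintained throughout --- rewrites it as $\langle\Psi_{c,h},A'_{jk}\Psi_{c,h}\rangle$, where $A'_{jk}$ is the same word with each $T_{c,h_c}$ replaced by $T_{c,h}$. Undoing the rearrangement yields $\langle\zeta,A_{jk}\zeta\rangle=\langle e^{iT_{c,h}(f^{(j)}_1)}\cdots e^{iT_{c,h}(f^{(j)}_{n_j})}\Psi_{c,h},\ e^{iT_{c,h}(g^{(k)}_1)}\cdots e^{iT_{c,h}(g^{(k)}_{m_k})}\Psi_{c,h}\rangle$, and re-summing gives $\langle v',w'\rangle=\langle v,w\rangle$. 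In particular $\|v'\|=\|v\|$, so the assignment kills the kernel (is well defined on $\a^0_{c,h}(I)\Psi_{c,h}$) and is isometric there.

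It follows that $K_{h,I}$ extends by continuity to an isometry $\h_{c,h}\to\h_{c,h_c}$; since its range is closed and contains the dense subspace $\a^0_{c,h_c}(I)\zeta$, it is all of $\h_{c,h_c}$, so $K_{h,I}$ is unitary. For the intertwining relation, fix $f\in C^\infty(S^1,\RR^+_0)\cup C^\infty(S^1,\RR^-_0)$ with support in $I$: applying $e^{iT_{c,h}(f)}$ to a generating vector merely prepends $f$ to the word, so $K_{h,I}e^{iT_{c,h}(f)}$ and $e^{iT_{c,h_c}(f)}K_{h,I}$ agree on the dense domain, hence --- all operators being bounded --- everywhere; conjugating by the unitary $K_{h,I}$ gives $K_{h,I}e^{iT_{c,h}(f)}K_{h,I}^*=e^{iT_{c,h_c}(f)}$. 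As for where the difficulty lies: the substantive work has already gone into corollaries \ref{z1z2vectors} and \ref{zvector} (analytic continuation in the lowest weight, and the replacement of a two-vector functional by a genuine vector state). What remains here is essentially bookkeeping, and the one point demanding care is to check, at each rearrangement, that the operator produced still lies in $\a^0_{c,h_c}(I)$ --- equivalently, that all the functions involved stay of constant sign with support in $I$ --- so that proposition \ref{zvector} and corollary \ref{z1z2vectors} remain applicable.
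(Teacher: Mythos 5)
Your argument is correct and is precisely the ``usual construction'' the paper invokes without spelling out: the GNS-type intertwining of the two vector states, with well-definedness and isometry obtained from the matrix-element identity supplied by Corollary \ref{z1z2vectors} and Proposition \ref{zvector}, surjectivity from cyclicity of $\zeta$, and the intertwining relation by prepending a generator to the words. You also correctly flag the one point needing care (that adjoints only flip signs, so all functions stay of constant sign with support in $I$), so nothing is missing relative to the paper's intended proof.
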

\begin{lemma}
The unitary operator $K_{h,I}$ appearing in the last corollary satisfies the relation
$$
K_{h,I} T_{c,h}(f) K_{h,I}^* = T_{c,h_c}(f)
$$
not only for functions $f\in C^\infty(S^1,\RR^+_0)\cup C^\infty(S^1,\RR^-_0)$ with support in $I$, but actually for
any $f\in C^\infty(S^1,\RR)$ with support in $I$.
\end{lemma}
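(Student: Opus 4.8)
The plan is to reduce the statement to the definite-sign case already established, by writing an arbitrary real test function supported in $I$ as a difference of two nonnegative ones. Concretely, given $f\in C^\infty(S^1,\RR)$ with $\mathrm{Supp}(f)\subset I$, I would first fix a function $\chi\in C^\infty(S^1,\RR^+_0)$ with support in $I$ which is constantly equal to some number $M>\max_{z\in S^1}|f(z)|$ on a neighbourhood of $\mathrm{Supp}(f)$, and put $g_1:=\chi$ and $g_2:=\chi-f$. Then $g_1,g_2\in C^\infty(S^1,\RR^+_0)$ both have support in $I$ (on $\mathrm{Supp}(f)$ one has $g_2\geq M-\max|f|>0$, and off it $g_2=\chi\geq 0$), and $g_1-g_2=f$.

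Next I would record that $K_{h,I}$ already intertwines the self-adjoint generators attached to $g_1$ and $g_2$. Indeed, for every real $t$ the function $tg_j$ still has constant sign and support in $I$, so the previous corollary gives $K_{h,I}e^{itT_{c,h}(g_j)}K_{h,I}^*=e^{itT_{c,h_c}(g_j)}$ for all $t\in\RR$ and $j=1,2$; since spectral calculus commutes with unitary conjugation, Stone's theorem (uniqueness of the generator of a strongly continuous one-parameter unitary group) forces
$$
K_{h,I}\,T_{c,h}(g_j)\,K_{h,I}^*=T_{c,h_c}(g_j)\qquad(j=1,2).
$$
In particular $K_{h,I}$ carries the common domain $\d(T_{c,h}(g_1))\cap\d(T_{c,h}(g_2))$ onto $\d(T_{c,h_c}(g_1))\cap\d(T_{c,h_c}(g_2))$ and, restricted to these domains, intertwines the operator $D:=T_{c,h}(g_1)-T_{c,h}(g_2)$ with $\tilde D:=T_{c,h_c}(g_1)-T_{c,h_c}(g_2)$. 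Both $D$ and $\tilde D$ are densely defined (their domains contain $V_{c,h}$, resp.\ $V_{c,h_c}$) and symmetric, hence closable.

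The key point is then that $\overline{D}=T_{c,h}(f)$ and likewise $\overline{\tilde D}=T_{c,h_c}(f)$. For the first identity, the algebraic relation (\ref{alg_rel}) (read on the core $V_{c,h}$ with the parameter value $-1$) shows that $Dv=T_{c,h}(f)v$ for $v\in V_{c,h}$, so $\overline{D}\supseteq\overline{T_{c,h}(f)|_{V_{c,h}}}=T_{c,h}(f)$; but $\overline{D}$ is a closed symmetric extension of the self-adjoint --- hence maximal symmetric --- operator $T_{c,h}(f)$, whence the inclusion is an equality. The same reasoning at $h_c$ gives $\overline{\tilde D}=T_{c,h_c}(f)$.

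Finally, conjugation by the unitary $K_{h,I}$ commutes with taking operator closures, since it is implemented by the unitary $K_{h,I}\oplus K_{h,I}$ on the graph Hilbert spaces; so from $K_{h,I}DK_{h,I}^*=\tilde D$ one obtains
$$
K_{h,I}\,T_{c,h}(f)\,K_{h,I}^*=K_{h,I}\,\overline{D}\,K_{h,I}^*=\overline{\tilde D}=T_{c,h_c}(f),
$$
which is the claim. I expect the only delicate point to be the domain bookkeeping in the third paragraph, but it rests on nothing more than the elementary fact that a closed symmetric extension of a self-adjoint operator coincides with it, together with (\ref{alg_rel}) supplying $V_{c,h}$ (resp.\ $V_{c,h_c}$) as a common core for $T_{c,h}(g_1)$, $T_{c,h}(g_2)$ and $T_{c,h}(f)$.
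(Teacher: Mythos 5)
Your proof is correct and follows essentially the same route as the paper: write $f$ as a difference of nonnegative functions supported in $I$, pass from the exponentials of the corollary to the self-adjoint generators, and then use the closure relation of eq.\! (\ref{alg_rel}) together with the fact that unitary conjugation commutes with taking closures. The only difference is that you spell out details the paper leaves implicit (the explicit cutoff decomposition, the Stone-theorem step, and the core/maximal-symmetry argument in place of directly invoking the stated identity $\overline{T_{c,h}(f_1)-T_{c,h}(f_2)}=T_{c,h}(f)$), which is harmless.
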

\begin{proof}
For any $f\in C^\infty(S^1,\RR)$ with support in $I$ we can find two nonnegative smooth functions $f_1,f_2\geq 0$ with support still in $I$ such that $f=f_1-f_2$. Of course the relations $K_{h,I}T_{c,h}(f_j) K_{h,I}^* = T_{c,h_c}(f_j)$
$(j=1,2)$ evidently follow from our last corollary. Then using eq.\! (\ref{alg_rel}),
\begin{eqnarray}
\nonumber
K_{h,I}T_{c,h}(f)K_{h,I}^* &=& 
K_{h,I}\left(\overline{T_{c,h}(f_1)-T_{c,h}(f_2)}\right)K_{h,I}^*\\
&=&
\nonumber
\overline{K_{h,I} T_{c,h}(f_1) K_{h,I}^* -
 K_{h,I}T_{c,h}(f_2) K_{h,I}^*}\\
 &=& 
 \overline{T_{c,h_c}(f_1) - T_{c,h_c}(f_2)} = T_{c,h_c}(f)
\end{eqnarray}
which is what we wanted to prove.
\end{proof}
Collecting all we have obtained so far, we can now state the main result of this paper. 
\begin{theorem}
Let $c,h$ and $\tilde{c},\tilde{h}$ be two admissible pairs of central charges and highest weights. Then $U_{c,h}$ is locally equivalent to $U_{c,\tilde{h}}$ if and only if $c=\tilde{c}$.
\end{theorem}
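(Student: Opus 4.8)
The plan is to reduce the ``if'' direction (the only non-trivial one, as the ``only if'' direction is already provided by Lemma \ref{lemmacc}) to the statements collected in the previous sections, treating the various ranges of $c$ and $h$ separately. The strategy is: first I would observe that local equivalence is an equivalence relation --- transitivity follows by composing the implementing unitaries $W_I$, symmetry by taking adjoints --- so it suffices to show that every admissible $U_{c,h}$ is locally equivalent to a single fixed ``reference'' representation at the same central charge. For $c<1$ the only admissible values of $h$ form a finite list, and by the coset construction of Goddard--Kent--Olive (as recalled in the introduction) each $U_{c,h}$ is locally equivalent to $U_{c,0}$; this case needs no further work here. For $c\geq 1$ I would pick $U_{c,h_c}$ with $h_c=\frac{c-1}{24}$ as the reference, and split the admissible highest weights into three pieces.

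\emph{The range $h\geq h_c$.} Here the result of Buchholz and Schulz-Mirbach \cite{BS-M} directly gives, for each open proper interval $I$, a unitary $W_I$ with $W_I T_{c,0}(f) W_I^* = T_{c,h}(f)$ for all $f$ supported in $I$; composing with the $W_I$ for $h=h_c$ (same reference \cite{BS-M}) and using Lemma \ref{lemmacc} yields local equivalence of $U_{c,h}$ and $U_{c,h_c}$.

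\emph{The range $0<h<h_c$ (the genuinely new case, requires $c>1$).} This is exactly the situation addressed by the machinery of Section 4 and the last lemma of Section 5: for a fixed $I\Subset S^1$, the last lemma produces a unitary $K_{h,I}$ satisfying $K_{h,I} T_{c,h}(f) K_{h,I}^* = T_{c,h_c}(f)$ for \emph{all} $f\in C^\infty(S^1,\RR)$ with support in $I$. By clause (i) $\Rightarrow$ (ii) of Lemma \ref{lemmacc} this is precisely local equivalence of $U_{c,h}$ and $U_{c,h_c}$ relative to $I$; since $I$ was arbitrary, $U_{c,h}$ is locally equivalent to $U_{c,h_c}$.

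\emph{The value $h=0$, when $c>1$.} This is not literally covered by the previous item, which assumed $h\in(0,h_c)$, but it follows by the same analytic-continuation argument: the functions $F(t,c,h)$ of Section 4 and the vectors of Corollary \ref{z1z2vectors} depend continuously on $h$ down to $h=0$ (note $h=0$ lies on the boundary ${\rm Re}(h)=0$ of the half-plane of analyticity, reached as a limit, and the spectrum-condition estimates of Fewster--Hollands \cite{FH} used in Corollary \ref{z1z2vectors} are uniform), so the construction of $K_{0,I}$ goes through verbatim, giving local equivalence of $U_{c,0}$ and $U_{c,h_c}$. Putting the three pieces together: for $c\geq 1$ every admissible $U_{c,h}$ is locally equivalent to $U_{c,h_c}$, hence (by transitivity) all of them are mutually locally equivalent; combined with the $c<1$ case and with the ``$c=\tilde c$ is necessary'' half of Lemma \ref{lemmacc}, this establishes the theorem. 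The main obstacle is the $0<h<h_c$ case, and that obstacle has already been overcome by the construction of $K_{h,I}$ in the preceding section; the only remaining subtlety to handle carefully is the limiting case $h=0$ at $c>1$, where one must check that the analyticity and boundedness-from-below inputs survive passage to the boundary of the relevant half-plane.
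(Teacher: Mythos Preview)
Your overall reduction is sound and mirrors the paper's proof: necessity from Lemma~\ref{lemmacc}, transitivity via composition of unitaries, the $c<1$ case via GKO, the range $h\geq h_c$ via Buchholz--Schulz-Mirbach, and the new range $0<h<h_c$ via the unitary $K_{h,I}$ built in Section~5. Up to this point you and the paper agree.

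The problem is your third bullet, the ``$h=0$ when $c>1$'' case. The boundary-limit argument you sketch is not justified by anything in the paper: the corollary in Section~4 gives analyticity only on the \emph{open} half-plane $\{\mathrm{Re}(z)>0\}$, and the vectors $\eta_{\pm s}$ and $\zeta_L,\zeta_R$ of Corollary~\ref{z1z2vectors} are constructed only for $h\in(0,h_c)$; there is no proof that these objects extend continuously to $h=0$, nor that the Fewster--Hollands lower bounds behave uniformly in~$h$ in the way you need. As written, this step is a gap.

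Fortunately the gap is entirely self-inflicted. You already observed in your first bullet that Buchholz--Schulz-Mirbach gives $U_{c,0}$ locally equivalent to $U_{c,h_c}$ (that is precisely the ``composing with the $W_I$ for $h=h_c$'' you wrote). So the case $h=0$ is already covered there, and your third bullet can simply be deleted. The paper makes this transparent by choosing $U_{c,0}$ rather than $U_{c,h_c}$ as the reference representation: then $h=0$ is tautological, $h\geq h_c$ is BS-M directly, and $0<h<h_c$ is handled by $W_{h,I}:=K_{h,I}^*W_{h_c,I}$. Your choice of $U_{c,h_c}$ as reference is harmless but creates the illusion of an extra case to treat.
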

\begin{proof}
As was already noted at lemma \ref{lemmacc}, $c=\tilde{c}$ is a necessary condition of local equivalence. Moreover, for each admissible pair of the central charge $c$ and highest weight $h$, the pair $(c,0)$ is also admissible. Clearly then, it is
enough to prove the local equivalence between $U_{c,h}$ and
$U_{c,0}$; if we can do that for any of the possible $h$-values,
than --- passing through $U_{c,0}$ --- we can also conclude the local equivalence of $U_{c,h}$ and $U_{c,\tilde{h}}$.
 
As was explained in the introduction, apart from the region
$\{1<c,\, 0<h<h_c=\frac{c-1}{24}\}$ we already know that 
for any $I\Subset S^1$ there exists a unitary operator $W_{h,I}$
such that 
\begin{equation}
\label{WTW}
W_{h,I} T_{c,0}(f) W_{h,I}^* = T_{c,h}(f)
\end{equation}
for all $f\in C^\infty(S^1,\RR)$ with support in $I$.
On the other hand, even if $c,h$ is in this ``bad''
region, we can use 1) the unitary $K_{h,I}$ 
constructed in this section and 2) the fact that
$c,h_c$ lies outside of the ``bad'' region so we already 
have a unitary $W_{h_c,I}$. Then
\begin{equation}
W_{h_c,I} T_{c,0}(f) W_{h_c,I}^* = T_{c,h_c}(f)
= K_{h,I} T_{c,h}(f) K_{h,I}^*
\end{equation}
implying that with $W_{h,I}:=K_{h,I}^* W_{h_c,I}$ 
we satisfy eq.\! (\ref{WTW}). Thus, the existence of 
a ``suitable'' unitary $W_{h,I}$ is ensured in all cases.
\end{proof}
\bigskip

\section*{Acknowledgment}

The author would like to thank the numerous discussions on the topic with Sebastiano Carpi, Roberto Longo and Yoh Tanimoto.

\end{document}